\newcommand{\cP}{\mathcal{P}}
\newcommand{\cD}{\mathcal{D}}
\newcommand{\cU}{\mathcal{U}}
\newcommand{\cW}{\mathcal{W}}
\newcommand{\lspan}{\operatorname{span}}
\newcommand{\sgn}{\operatorname{sgn}}
\newcommand{\rL}{\mathrm{L}}
\newcommand{\hW}{{\hat{W}}}
\newcommand{\Td}[1]{T_{#1}}
\newcommand{\Ad}[1]{A_{#1}}
\newcommand{\Bd}[1]{B_{#1}}
\newcommand{\Pud}[2]{P^{(#1)}_{#2}}
\newcommand{\Pab}[1]{\Pud{\al,\be}{#1}}
\newcommand{\hPab}{\hat{P}^{(\alpha,\beta)}}
\newcommand{\hPud}[2]{\hat{P}^{(#1)}_{#2}}
\newcommand{\hLk}{\hat{L}_{k,}{}}
\newcommand{\hmuk}{{\hat{\mu}_{k}}}
\newcommand{\hmuab}{{\hat{\mu}_{\alpha,\beta}}}
\newcommand{\Wab}{{W_{\alpha,\beta}}}
\newcommand{\hWab}{{\hat{W}_{\alpha,\beta}}}
\newcommand{\hphi}{{\hat{\phi}}}
\newcommand{\hA}{{\hat{A}}}
\newcommand{\hB}{{\hat{B}}}
\newcommand{\hT}{{\hat{T}}}
\newcommand{\hy}{{\hat{y}}}
\newcommand{\hq}{{\hat{q}}}
\newcommand{\hr}{{\hat{r}}}
\newcommand{\hb}{{\hat{b}}}
\newcommand{\hw}{{\hat{w}}}
\newcommand{\Rset}{\mathbb{R}}
\newcommand{\al}{\alpha}
\newcommand{\be}{\beta}
\newtheorem{thm}{Theorem}[section]
\newtheorem{prop}{Proposition}[section]
\newtheorem{lem}{Lemma}[section]
\theoremstyle{definition}
\newtheorem{example}{Example}[section]
\newtheorem{definition}{Definition}[section]
\theoremstyle{remark}
\begin{document}

\title[orthogonal polynomials and flags]{On orthogonal polynomials
  spanning a non-standard flag}

\author{David G\'omez-Ullate} \address{ Departamento de F\'isica
  Te\'orica II, Universidad Complutense de Madrid, 28040 Madrid,
  Spain} 

\author{ Niky Kamran } \address{Department of Mathematics and
  Statistics, McGill University Montreal, QC, H3A 2K6, Canada}

\author{Robert Milson} \address{Department of Mathematics and
  Statistics, Dalhousie University, Halifax, NS, B3H 3J5, Canada}

\keywords{Orthogonal polynomials; Sturm--Liouville problems;
  Exceptional polynomial subspaces}
\subjclass[2000]{Primary: 34L10, 42C05; Secondary:  33C45, 34B24}

\begin{abstract}
  We survey some recent developments in the theory of orthogonal
  polynomials defined by differential equations.  The key finding is
  that there exist orthogonal polynomials defined by 2nd order
  differential equations that fall outside the classical families of
  Jacobi, Laguerre, and Hermite polynomials.  Unlike the classical
  families, these new examples, called exceptional orthogonal
  polynomials, feature non-standard polynomial flags; the lowest
  degree polynomial has degree $m>0$.  In this paper we review the
  classification of codimension $m=1$ exceptional polynomials, and
  give a novel, compact proof of the fundamental classification
  theorem for codimension 1 polynomial flags.  As well, we describe
  the mechanism or rational factorizations of 2nd order operators as
  the analogue of the Darboux transformation in this context.  We
  finish with the example of higher codimension generalization of
  Jacobi polynomials and perform the complete analysis of parameter
  values for which these families have non-singular weights.
\end{abstract}
\maketitle
\section{Introduction}\label{sec:Intro}
Even though the orthogonal polynomials of Hermite, Laguerre and Jacobi
arose from various applications in applied mathematics and physics,
these three families now serve as the foundational examples of
orthogonal polynomial theory.  As such, these classical examples admit
many interesting and fruitful generalizations.

A key property of classical orthogonal polynomials is
the fact that they can be defined by means of a Sturm-Liouville
problem.  One of the foundational results in this area a Theorem of
Bochner \cite{Bo} which states that if an infinite sequence of
polynomials of degree $0,1,2,\ldots$ satisfies a second order
eigenvalue equation of the form
\begin{equation}\label{eq:bochner}
  p(x)y'' + q(x) y' + r(x) y=\lambda y
\end{equation}
then $p(x),q(x)$ and $r(x)$ must be polynomials of degree at most
$2,1$ and $0$ respectively.  In addition, if the eigenpolynomial
sequence is $L^2$-orthogonal relative to a measure with finite
moments, then it has to be, up to an affine transformation of $x$, one
of the three classical classical families listed above
\cite{Aczel,Mikolas,Feldmann,Lesky,KL97}

Implicit in the above definition of classical polynomials is the
assumption that orthogonal polynomials form a basis for the standard
polynomial flag $\cP_0 \subset \cP_1\subset \cP_2\subset \ldots$,
where $\cP_n=\lspan\{ 1,x,x^2,\ldots, x^n\}$. In a pair of recent
papers \cite{GKM09a,GKM09b}, we showed that there exist complete
orthogonal polynomial systems defined by Sturm-Liouville problems that
extend beyond the above classical examples.  What distinguishes our
hypotheses from those made by Bochner is that the eigenfunction
corresponding to the lowest eigenpolynomial of the sequence need not
be of degree zero, even though the full set of eigenfunctions still
forms a basis of the weighted $\rL^2$ space.

Already some 20 years ago, particular examples of Hermite-like
polynomials with non-standard flags were described in the context of
supersymmetric quantum mechanics \cite{dubov,CKS95}.  The last few years have
seen a great deal of activity in the area of non-standard flags; the
topic now goes under the rubric of \emph{exceptional} orthogonal
polynomials.  There are applications to shape-invariant
potentials \cite{Quesne1}, supersymmetric transformations \cite{GKM10},
to discrete quantum mechanics \cite{SO2}, mass-dependent potentials
\cite{Midya-Roy}, and to quasi-exact solvability \cite{Tanaka}.  As
well, there are now examples of orthogonal polynomials that are
solutions of 2nd order equations and that form flags of arbitrarily
high codimension \cite{SO1}.

In light of the activity surrounding orthogonal polynomials with
non-standard flags, we hope that it will be useful to summarize some
key results and findings, and to supply stand-alone proofs to some key
propositions. We note that the adjective ``exceptional''
was introduced in the context of our investigation of the equivalence
problem for polynomial subspaces \cite{GKM09a,CrooksMilson}.  

\section{Preliminaries}
\subsection{Polynomial flags}

Let $U_1\subset U_2\subset \ldots$, where $\dim U_k = k$ be a flag of
real, finite-dimensional, polynomial subspaces.  Let $n_k$ denote the
degree of $U_k$; that is $n_k$ is the maximum of the degrees of the polynomials $p\in
U_k$.  Let $\ell_k=n_k+1-k$ denote the codimension of $U_k$ in
$\cP_{n_k}$, where the $n+1$ dimensional polynomial vector spaces
\[\cP_n=\lspan\{ 1,x,x^2,\ldots, x^n\}\]
make up the standard polynomial flag: $\cP_0 \subset \cP_1\subset
\cP_2\subset \ldots$.  We say that $\{ p_k\}_{k=1}^\infty$ is a basis
of the flag if
\begin{equation}
  \label{eq:Ukspan}
  U_k = \lspan \{ p_1,\ldots, p_k \}
\end{equation}
Note that no generality is lost if we assume that $\deg p_k  = n_k$.
\begin{definition}
  We call $\cU=\{U_k\}_{k=1}^\infty$ a degree-regular flag if $n_k <
  n_{k+1}$ for all $k$.  Equivalently, a flag is degree-regular if it
  admits a basis $\{ p_k \}_{k=1}^\infty$ such that $\deg p_k < \deg p_{k+1}$ for
  all $k$.
\end{definition}
\begin{prop}
  A polynomial flag is degree-regular if and only if the codimension
  sequence obeys $\ell_k \leq \ell_{k+1}$ for all $k$.
\end{prop}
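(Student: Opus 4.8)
The plan is to reduce the claimed equivalence to a single algebraic identity relating consecutive codimensions, the only genuinely geometric input being monotonicity of the degree sequence. First I would observe that the inclusion $U_k \subset U_{k+1}$ forces $n_{k+1} \geq n_k$ for every $k$: since every polynomial of $U_k$ already lies in $U_{k+1}$, the maximal degree attained on $U_{k+1}$ can only equal or exceed the maximal degree attained on $U_k$. Thus the degree sequence $\{n_k\}$ is automatically nondecreasing, and degree-regularity is precisely the assertion that each of these inequalities is strict.

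Next I would compute the difference of consecutive codimensions directly from the definition $\ell_k = n_k + 1 - k$. A one-line calculation gives
\begin{equation*}
  \ell_{k+1} - \ell_k = \bigl(n_{k+1} + 1 - (k+1)\bigr) - \bigl(n_k + 1 - k\bigr) = n_{k+1} - n_k - 1.
\end{equation*}
Since the quantities $n_k$, $\ell_k$, and $k$ are all integers, the strict inequality $n_{k+1} > n_k$ is equivalent to $n_{k+1} - n_k \geq 1$, which by the displayed identity is in turn equivalent to $\ell_{k+1} - \ell_k \geq 0$, i.e. to $\ell_k \leq \ell_{k+1}$. Applying this equivalence for every index $k$ yields the proposition: the flag is degree-regular exactly when the codimension sequence is nondecreasing.

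The main obstacle, such as it is, lies entirely in the first step: one must confirm that $n_{k+1} \geq n_k$ holds unconditionally, so that the degree sequence can never decrease and the only remaining question is strictness. Once this monotonicity is in hand, the remainder is a purely formal manipulation, and the crucial point to watch is the integrality of the degrees, which is what converts the \emph{strict} inequality $n_{k+1} > n_k$ into the \emph{non-strict} inequality $\ell_{k+1} \geq \ell_k$.
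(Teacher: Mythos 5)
Your proof is correct. The paper states this proposition without proof, and your argument supplies exactly the verification one would expect: the identity $\ell_{k+1}-\ell_k = n_{k+1}-n_k-1$ combined with integrality of the degrees, which converts the strict inequality $n_{k+1}>n_k$ into the non-strict inequality $\ell_k\leq\ell_{k+1}$. One small remark: your first step, establishing $n_{k+1}\geq n_k$ from the inclusion $U_k\subset U_{k+1}$, is true but not actually needed anywhere in the argument --- both directions of the equivalence follow from the displayed identity and integrality alone, so the step you single out as ``the main obstacle'' can simply be deleted.
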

\noindent Henceforth, we will always assume that all flags under
discussion are degree-regular.
\begin{definition}
  We say that the codimension of a polynomial flag is
  \emph{semi-stable} if the codimension sequence $\ell_k$ admits an
  upper bound. If so, we call $\ell=\lim_k \ell_k$ the codimension of
  the flag.  We say that codimension is \emph{stable} if $\ell_k=\ell$
  is constant for all $k$.
\end{definition}
\noindent Likewise, whenever we speak of the codimension of a
polynomial flag, we always assume that the flag has a semi-stable
codimension.

Let $\cD_2(\cU)$ denote the vector space of 2nd order differential operators
with rational real-valued coefficients
\begin{equation}
  \label{eq:Tydef}
  T(y) = p(x) y'' + q(x) y' + r(x) y,\quad p,q,r\in \Rset(x) 
\end{equation}
that preserve $\cU$; i.e., $T(U_k) \subset U_k$ for all $k$.  If
$p,q,r$ are polynomials, we call $T$ a polynomial operator.
Equivalently, an operator is non-polynomial if and only if it has a
pole.
\begin{definition}
  We say that $\cU$ is \emph{imprimitive} if it admits a common
  factor; i.e., $\cU$ is spanned by $\{qp_k\}$ where $q(x)$ is a
  polynomial of degree $\geq 1$. Otherwise, we call $\cU$
  \emph{primitive}.
\end{definition}
\noindent At this juncture, it is important to state the following two
Propositions.
\begin{prop}
  \label{prop:cramer}
  Let $T(y) = py''+qy'+ry$ be a differential operator such that
  \[T(y_i) = g_i,\; i=1,2,3,\]
  where $y_i, g_i$ are polynomials with
  $y_1, y_2, y_3$ linearly independent.  Then, $p,q,r$ are rational
  functions with  the Wronskian
  \[W(y_1,y_2,y_3) = \det \begin{pmatrix}
    y''_1 & y'_1 & y_1\\
    y''_2 & y'_2 & y_2\\
    y''_3 & y'_3 & y_3
  \end{pmatrix} \] in the denominator.
\end{prop}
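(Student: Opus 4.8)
The plan is to read the three hypotheses $T(y_i)=g_i$ not as equations for the unknown functions $y_i$, but as a single linear system whose unknowns are the coefficient functions $p,q,r$. Writing the system in matrix form gives
\[
\begin{pmatrix}
y''_1 & y'_1 & y_1\\
y''_2 & y'_2 & y_2\\
y''_3 & y'_3 & y_3
\end{pmatrix}
\begin{pmatrix} p\\ q\\ r \end{pmatrix}
=
\begin{pmatrix} g_1\\ g_2\\ g_3 \end{pmatrix},
\]
and the coefficient matrix is precisely the matrix whose determinant is $W(y_1,y_2,y_3)$. The strategy is then to solve for $p,q,r$ by Cramer's rule, which will automatically produce each coefficient as the quotient of a determinant with polynomial entries over the common denominator $W(y_1,y_2,y_3)$.

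Before Cramer's rule can be invoked, I first have to confirm that $W(y_1,y_2,y_3)$ is not the identically zero polynomial. This is the one point that needs an actual argument rather than bookkeeping. I would appeal to the classical criterion that the Wronskian of finitely many polynomials (or, more generally, real-analytic functions on an interval) vanishes identically if and only if the functions are linearly dependent over the constants. Since $y_1,y_2,y_3$ are assumed linearly independent, their Wronskian is a nonzero polynomial, so dividing by $W(y_1,y_2,y_3)$ is legitimate on the complement of its finite zero set.

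With non-vanishing of the Wronskian established, the conclusion follows at once. Cramer's rule gives, for example,
\[
p = \frac{1}{W(y_1,y_2,y_3)}\det
\begin{pmatrix}
g_1 & y'_1 & y_1\\
g_2 & y'_2 & y_2\\
g_3 & y'_3 & y_3
\end{pmatrix},
\]
with analogous formulas for $q$ and $r$ obtained by replacing the corresponding column of the coefficient matrix by the column $(g_1,g_2,g_3)$. Each numerator is a determinant of polynomials, hence a polynomial, so $p,q,r$ are rational functions sharing the common denominator $W(y_1,y_2,y_3)$, exactly as claimed.

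The only genuine obstacle is the identically-nonzero claim for the Wronskian; the rest is linear algebra. I would be careful to phrase the linear-independence/Wronskian equivalence in the polynomial or real-analytic setting, where it holds unconditionally, rather than invoking the $C^\infty$ version, which can fail for merely smooth functions.
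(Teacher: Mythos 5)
Your proposal is correct and takes essentially the same route as the paper, which likewise solves the linear system for $p,q,r$ by Cramer's rule with $W(y_1,y_2,y_3)$ as the common denominator. The only difference is that you explicitly justify the non-vanishing of the Wronskian via linear independence of polynomials (correctly noting this fails for merely $C^\infty$ functions), a point the paper's one-line proof leaves implicit.
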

\begin{proof}
  It suffices to apply Cramer's rule to solve the linear system
  \[ \begin{pmatrix}
    y''_1 & y'_1 & y_1\\
    y''_2 & y'_2 & y_2\\
    y''_3 & y'_3 & y_3
  \end{pmatrix}
  \begin{pmatrix}
    p\\q\\r
  \end{pmatrix}
  =
  \begin{pmatrix}
    g_1\\ g_2\\ g_3
  \end{pmatrix}
  \]
\end{proof}
\noindent
We should also note that there is a natural linear isomorphism between the space $\cD_2(\cU)$ of a flag 
$\cU$ spanned by $\{p_k\}_{k=1}^{\infty}$ and the space $\cD_2(\tilde{\cU})$ of the imprimitive flag
$\tilde{\cU}$ spanned by $\{ q p_k \}_{k=1}^{\infty}$.
\begin{prop}
  \label{prop:gaugeequiv}
  Let $T(y)$ be an operator that preserves $\cU$. Then, the
  gauge-equivalent operator $\tilde{T} = q T q^{-1}$ preserves
  $\tilde{\cU}$.
\end{prop}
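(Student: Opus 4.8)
The plan is to unwind the definitions and reduce the whole claim to the single algebraic identity $\tilde T(qy)=q\,T(y)$. By construction, the subspaces of the imprimitive flag are $\tilde U_k = q\,U_k = \lspan\{qp_1,\dots,qp_k\}$, so every element of $\tilde U_k$ is of the form $\tilde y = q y$ for a unique $y\in U_k$. The proposition will follow once I show that applying $\tilde T$ to such a $\tilde y$ returns $q$ times an element of $U_k$.

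First I would compute directly from the definition $\tilde T = q\,T\,q^{-1}$, which as an operator means $\tilde T(f)=q\,T(q^{-1}f)$. Setting $f=\tilde y = qy$ gives $\tilde T(qy)=q\,T(q^{-1}\cdot qy)=q\,T(y)$, the cancellation being legitimate because $q$ is a nonzero rational function. Since $T$ preserves $\cU$ we have $T(y)\in U_k$, and hence $\tilde T(\tilde y)=q\,T(y)\in q\,U_k=\tilde U_k$. As $\tilde y\in\tilde U_k$ and the index $k$ were arbitrary, this yields $\tilde T(\tilde U_k)\subset\tilde U_k$ for all $k$, i.e.\ that $\tilde T$ preserves $\tilde\cU$.

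The only point demanding any care — and the closest thing to an obstacle — is verifying that $\tilde T$ genuinely belongs to $\cD_2(\tilde\cU)$, that is, that conjugation by $q$ produces a second-order differential operator with rational coefficients and not merely a linear endomorphism of the flag. Writing $T(y)=p\,y''+a\,y'+b\,y$ with $p,a,b\in\Rset(x)$, a routine application of the product rule to $q\,T(q^{-1}f)$ shows that $\tilde T$ is again of the form $\tilde p\,f''+\tilde a\,f'+\tilde b\,f$, with the leading coefficient unchanged, $\tilde p=p$, and with $\tilde a,\tilde b\in\Rset(x)$; I would record these coefficients explicitly but not dwell on the computation. Combined with the preservation property established above, this confirms that $\tilde T\in\cD_2(\tilde\cU)$ and, as an aside, that the assignment $T\mapsto\tilde T$ realizes the natural linear isomorphism $\cD_2(\cU)\to\cD_2(\tilde\cU)$ announced just before the statement.
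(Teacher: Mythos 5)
Your proof is correct: the identity $\tilde T(qy)=q\,T(qq^{-1}y)=q\,T(y)$ together with $T(U_k)\subset U_k$ immediately gives $\tilde T(\tilde U_k)\subset \tilde U_k$, and your check that conjugation by $q$ yields a second-order operator with rational coefficients (same leading coefficient $p$) is the right verification that $\tilde T\in\cD_2(\tilde\cU)$. The paper states this proposition without proof, treating it as evident, and your argument is exactly the routine definitional unwinding it implicitly relies on.
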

Proposition \ref{prop:cramer} makes clear why we restrict our
definition of $\cD_2$ to operators with rational coefficients.
Proposition \ref{prop:gaugeequiv} explains the need for primitive
flags; these are the canonical representatives for the equivalence
relation under gauge transformations.

\begin{definition}
  Let $\cU$ be a polynomial flag of semi-stable codimension.  We say
  that $\cU$ is an \emph{exceptional flag} if $\cU$ is primitive and
  if $\cD_2(\cU)$ does not preserve a polynomial flag of smaller
  codimension.
\end{definition}
\noindent 

Here are some examples to illustrate the above definitions.
\begin{example}
  The codimension 1 flag spanned by $ 1,x^2,x^3,\ldots,$ is
  exceptional because the non-polynomial operator
  \[T(y) = y''-2y'/x\] preserves the flag.  Since $\ell_1 = 0,
  \ell_k=1,\; k\geq 2$, the codimension of this flag is not stable,
  but only semi-stable.
\end{example}

\begin{example}
  By contrast, the flag spanned by $x+1, x^2, x^3,\ldots$ has a stable
  codimension $\ell=1$.  This flag is exceptional because it is
  preserved by the non-polynomial operator
  \[ T(y) =y''-2(1+1/x)y'+(2/x) y.\]
\end{example}

\begin{example}
  Let $h_k(x)$ denote the degree $k$ Hermite polynomial.  The
  codimension 1 flag spanned by $h_1, h_2, h_3,\ldots$ is not
  exceptional.  The flag is preserved by the operator $T(y) = y''-x
  y'$, but this operator also preserves the standard, codimension
  zero, polynomial flag.
\end{example}
\begin{example}
  The codimension 1 flag spanned by $x,x^2,x^3, \ldots$ is not
  exceptional because $x$ is a common factor. The flag is preserved by
  the operator
  \[ \tilde{T}(y) = y'' - 2y'/x+2y/x^2.\] However, observe that
  $\tilde{T} = x T x^{-1}$ where $T(y) = y''$ is an operator that
  preserves the standard, codimension zero, polynomial flag.
\end{example}

\begin{example}
  \label{ex:codim2}
  Let
  \begin{equation}
    \label{eq:pidef}
    y_{2k-1} = x^{2k-1} - (2k-1)x,\; y_{2k} = x^{2k} - k
    x^2,\quad k=2,3,4,\ldots
  \end{equation}
  Consider the codimension-2 flag spanned by $1,y_3, y_4,
  y_5,y_6,\ldots$. The degree sequence of the flag is
  $0,3,4,5,\ldots$; therefore the codimension is not stable, but
  merely semi-stable.  The flag is preserved by the following
  operators \cite{GKM04}:
  \begin{align}
    T_3(y) &= (x^2-1)y''-2x y',\\
    T_2(y) &= xy''-2(1+2/(x^2-1))y',\\
    T_1(y) &= y'' +x(1-4/(x^2-1))y'.
  \end{align}
  The flag is exceptional, because $T_1$ and $T_2$ do not preserve the
  standard, codimension zero flag.  These operators cannot preserve a
  codimension 1 flag, because, as will be shown in Lemma
  \ref{lem:1pole}, an operator that preserves a codimension 1 flag can
  have at most 1 pole.
\end{example}

\subsection{Orthogonal polynomials}
\begin{definition}
  We will say that a 2nd order operator $T(y)$ is exactly solvable by
  polynomials if the eigenvalue equation
  \begin{equation}
    \label{eq:Tylambday}
    T(y) = \lambda y.
  \end{equation}
  has infinitely many eigenpolynomial solutions $y=y_j$ with
  \[  \deg y_j < \deg y_{j+1},\quad j=1,2,\ldots.\]
\end{definition}

Let $I=(x_1, x_2)$ be an open interval (bounded, unbounded, or
semi-bounded) and let $W(x) dx$ be a positive measure on $I$ with
finite moments of all orders.  We say that a sequence of real
polynomials $\{y_j\}_{j=1}^\infty$ forms an \emph{orthogonal
  polynomial system} (OPS for short) if the polynomials constitute an
orthogonal basis of the Hilbert space $\rL^2(I,W dx)$.  If the
codimension of the corresponding flag is stable, or semi-stable, we
will say that the OPS has codimension $m$.

The following definition encapsulates the notion of a system of
orthogonal polynomials defined by a second-order differential
equation.  Consider the boundary value problem
\begin{align}
  \label{eq:bvp}
  &-(Py')'+ Ry  = \lambda W y\\
  \label{eq:pslpbc}
  &\lim_{x\to x_{i}^{\pm}} (Py'u-Pu'y)(x) = 0,\quad i=1,2,
\end{align}
where $P(x), W(x)>0$ on the interval $I=(x_1,x_2)$, and where $u(x)$
is a fixed polynomial solution of \eqref{eq:bvp}.  We speak of a
polynomial Sturm-Liouville problem (PSLP) if the resulting spectral
problem is self-adjoint, pure-point and if all eigenfunctions are
polynomial.  

If the eigenpolynomials span the standard flag, then we recover the
classical orthogonal polynomials, the totality of which is covered
by Bochner's theorem.  If the solution flag has a codimension $m>0$,
Bochner's theorem no longer applies and we encounter a generalized
class of polynomials; we name these exceptional, or $X_m$ polynomials.

Given a PSLP, the operator
\[ T(y) = W^{-1}(Py')' - W^{-1} Ry \] is exactly solvable by
polynomials.  Letting $p(x), q(x), r(x)$ be the rational coefficients
of $T(y)$ as in \eqref{eq:Tydef}, we have
\begin{align}
  \label{eq:Pdef}
  P(x) &= \exp\left(\int^x  \!\!q/p\,dx\right),\\
  \label{eq:Wdef}
  W(x) &= (P/p)(x),\\
  \label{eq:Rdef}
  R(x) &= -(rW)(x),
\end{align}
Hence, for a PSLP, $P(x), R(x), W(x)$ belong to the quasi-rational
class \cite{GB09}, meaning that their logarithmic derivative
is a rational function.
% .  Here, we say that $\phi(x)$ is
% quasi-rational if
% \[ \frac{\phi'(x)}{\phi(x)} = \frac{d}{dx} \log \phi(x) \]
% is a rational function.  

Conversely, given an operator $T(y)$ exactly solvable by polynomials and an interval $I=(x_1,x_2)$
we formulate a PLSP \eqref{eq:bvp} by employing
\eqref{eq:Pdef}--\eqref{eq:Rdef} as definitions, and by adjoining the
following assumptions:
\begin{itemize}
 \item[(PSLP1)] $P(x), W(x)$ are continuous and positive on $I$
 \item[(PSLP2)] $Wdx$ has finite moments: $\int_I x^n W(x) dx
   <\infty,\quad n\geq 0$
 \item[(PSLP3)] $ \lim_{x\to x_{i}} P(x) x^n = 0,\quad i=1,2,\quad
   n\geq 0$
 \item[(PSLP4)] the eigenpolynomials of $T(y)$ are dense in
   $\rL^2(I,Wdx)$.
\end{itemize}
These definitions and assumptions (PSLP1), (PSLP2) imply Green's
formula:
\begin{equation}
  \label{eq:Msymmetric}
  \int^{x_2}_{x_1}  T(f)g \, Wdx - \int_{x_1}^{x_1} T(g) f\, W dx=  P(f'g-fg') \Big|^{x_2}_{x_1}
\end{equation}
By (PSLP3) if $f(x),g(x)$ are polynomials, then the right-hand side is
zero.  If $f$ and $g$ are \emph{eigenpolynomials} of $T(y)$ with
unequal eigenvalues, then necessarily, they are orthogonal in
$\rL^2(I, Wdx)$.  Finally, by (PSLP4) the eigenpolynomials of $T(y)$
are complete in $\rL^2(I,Wdx)$, and hence satisfy the definition of an
OPS.
\section{Codimension 1 flags}
The key result in this paper is the following. An analogous theorem
for polynomial subspaces, rather than flags, was proved in
\cite{GKM09a}.
\begin{thm}
  \label{thm:main}
  Up to affine transformations, the flag spanned by $\{x+1,
  x^2,x^3,\ldots \}$ is the unique stable codimension 1 exceptional
  flag.
\end{thm}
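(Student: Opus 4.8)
The plan is to parametrize all stable codimension $1$ flags by a single linear functional and then use the forced presence of a pole to rigidify the flag. For a stable codimension $1$ flag one has $\ell_k=1$, hence $n_k=k$, so each $U_k$ has codimension one in $\cP_k$ and may be written as $U_k=\ker(\Phi|_{\cP_k})$ for a linear functional $\Phi$. The nesting $U_{k-1}\subset U_k$ forces the functional cutting out $U_k$ to restrict on $\cP_{k-1}$ to a nonzero multiple of the one cutting out $U_{k-1}$ (the alternative would force $U_k=\cP_{k-1}$, violating $n_k=k$); after rescaling, these glue to a single functional $\Phi$ on $\Rset[x]$, recorded by its moments $a_j=\Phi(x^j)$. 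First I would verify two bookkeeping facts: stability of the codimension (i.e.\ $1\notin U_1$, so $n_1=1$) is equivalent to $a_0\neq0$, and imprimitivity is equivalent to $\Phi$ being proportional to an evaluation $f\mapsto f(x_0)$, i.e.\ to the moment sequence $(a_j)$ being geometric; thus primitivity means $(a_j)$ is not geometric.

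The second step reduces exceptionality to an analytic condition. I would first observe that a \emph{polynomial} operator preserving any degree-regular flag automatically preserves the standard flag: from $T(p_k)\in U_k$ and $\deg p_k=n_k$ one reads off $\deg p\le2,\ \deg q\le1,\ \deg r\le0$, which is exactly the condition for preserving $\cP_0\subset\cP_1\subset\cdots$. Conversely, an operator preserving the standard flag must have polynomial $p,q,r$ (apply it to $1,x,x^2$). Hence, for a primitive flag, being exceptional is equivalent to $\cD_2(\cU)$ containing a \emph{non-polynomial} operator, i.e.\ one with a pole. By Lemma~\ref{lem:1pole} such an operator has exactly one pole, which an affine change of variable places at $x=0$.

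The heart of the argument is then a local analysis at this pole. Writing $T(y)=py''+qy'+ry$ with $p,q,r$ rational and regular away from $0$, the requirement $T(f)\in U_k\subset\Rset[x]$ means the principal part of $T(f)$ at $0$ must vanish for every $f\in U_k$. For a flag element of high degree this is automatic, so the pole only constrains the bottom of the flag: the degree-one element $p_1\propto a_1-a_0x$ satisfies $T(p_1)=\lambda_1p_1$ (as $T(U_1)\subset U_1$), which expresses $q$ in terms of $r$ and $p_1$, and $T(U_2)\subset U_2$ then fixes the polar parts of $p,q,r$. I expect this to show that the pole is simple with $p$ regular at $0$, so that the residue of $T(f)$ is $r_1f(0)+q_1f'(0)$; vanishing of this residue is precisely the flag condition, and matching against $\Phi$ gives $a_j=0$ for $j\ge2$ together with $(a_0,a_1)=(r_1,q_1)$. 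With $a_0\neq0$ (stability) and $a_1\neq0$ (primitivity, since $(a_0,a_1,0,\dots)$ is geometric iff $a_1=0$), a final rescaling $x\mapsto\alpha x$ normalizes the pair $(a_0,a_1)$ and identifies $\Phi$ with $f\mapsto f'(0)-f(0)$, whose kernel flag is exactly $\{x+1,x^2,x^3,\dots\}$.

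The step I expect to be the main obstacle is showing that the single pole is genuinely simple with $p$ regular, equivalently that $\Phi$ has order $\le1$ (no $a_j$ with $j\ge2$). A pole in the leading coefficient $p$ would a priori contribute an $f''(0)$ term to the residue and hence allow $a_2\neq0$; such a flag is still primitive and stable and cannot be removed by an affine map (rescaling sends $a_j\mapsto a_j/\alpha^j$ but cannot annihilate $a_2$), so these candidates must be excluded by showing they admit \emph{no} non-polynomial flag-preserving operator. I would handle this by combining the polar constraints above with the behaviour at infinity---$\deg p\le2,\ \deg q\le1,\ \deg r\le0$ forced by $\deg T(f)\le\deg f$---and with Proposition~\ref{prop:cramer}, which writes $p,q,r$ over the Wronskian of three flag elements and thereby bounds the pole order at $0$; the resulting system should force the coefficient of $f''(0)$ to vanish, completing the classification.
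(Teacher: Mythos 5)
Your framework---gluing the codimension-one subspaces $U_k=\ker(\Phi|_{\cP_k})$ into a single functional $\Phi$ with moments $a_j$, characterizing stability by $a_0\neq 0$ and imprimitivity by the moment sequence being geometric, and reducing exceptionality of a primitive flag to the existence of a non-polynomial operator in $\cD_2$---is sound, and is essentially a dual reformulation of the paper's setup (your second step is Lemma \ref{lem:polyop}, and you may legitimately cite Lemma \ref{lem:1pole} for uniqueness of the pole). The endgame is also fine: \emph{if} the polar constraint on $T(f)$ is the single functional $f\mapsto r_1f(0)+q_1f'(0)$, then codimension one forces $\Phi$ proportional to it, and your normalization yields $\{x+1,x^2,x^3,\ldots\}$. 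But there is a genuine gap exactly where you flag one: you never exclude the case $a_2\neq 0$, i.e.\ the case \eqref{eq:x1nu2} of Lemma \ref{lem:1pole} with basis $1+a_1x^2,\,x+a_2x^2,\,x^3,\ldots$, and this exclusion is the actual content of the theorem, not a residual technicality. Your residue formula presupposes what has to be proven (that $p$ is regular at the pole and that $q,r$ have at most simple poles); a priori the Cramer/Wronskian bound from $x^3,x^4,x^5\in U$ only bounds the pole orders, and then each negative power of $x$ in $T(f)$ yields its own functional in the jet $f(0),f'(0),f''(0),\ldots$, every one of which must be proportional to $\Phi$. Your closing plan (``the polar constraints plus the degree bounds at infinity should force the $f''(0)$ coefficient to vanish'') is a hope, not an argument; note also that for rational coefficients the bounds $\deg p\le 2$, $\deg q\le 1$, $\deg r\le 0$ are not even well formed---the correct statement concerns the degree-homogeneous decomposition.

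For comparison, here is how the paper closes precisely this case, and it is a finite case analysis rather than pure degree counting. Decompose $T=\sum_{d=N_1}^{N_2}T_d$ into degree-homogeneous parts as in \eqref{eq:Tdydef}; the leading-term argument of Lemma \ref{lem:polyop} gives $N_2\le 0$. Since $T_d(x^j)=\bigl(j(j-1)\alpha_d+j\beta_d+\gamma_d\bigr)x^{j+d}$ vanishes for at most two values of $j$, the lowest part $T_{N_1}$ cannot annihilate three distinct monomials. But polynomiality of $T(1+a_1x^2)$, $T(x+a_2x^2)$ and $T(x^3)$ forces $T_{N_1}$ to annihilate $1,x,x^3$ when $N_1\le -4$; $1,x,x^5$ when $N_1=-3$; $1,x,x^4$ when $N_1=-2$; and when $N_1=-1$ it forces $T_{-1}(y)=\alpha_{-1}xy''+\beta_{-1}y'$, which has no pole. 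Hence $N_1=N_2=0$, so $T$ is polynomial---a contradiction. You would need to import this argument (or an equivalent rigidity statement controlling all polar orders simultaneously) to turn your third step into a proof; the rest of your write-up, including the functional parametrization and the primitivity and normalization steps, is correct and would survive intact.
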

\noindent
The proof proceeds by means of two lemmas.

\begin{lem}
  \label{lem:polyop}
  A primitive, codimension 1 polynomial flag is exceptional if and
  only if $\cD_2$ includes a non-polynomial operator.
\end{lem}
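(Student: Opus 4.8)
The plan is to reduce the lemma to the assertion that, for a primitive codimension 1 flag $\cU$, the space $\cD_2(\cU)$ consists entirely of polynomial operators if and only if $\cD_2(\cU)$ preserves the standard flag $\{\cP_n\}$. First I would record the elementary observation that, under the standing degree-regularity assumption, the only polynomial flag of codimension strictly less than $1$ is the standard flag itself: codimension $0$ forces $n_k = k-1$, and since $\dim U_k = k = \dim \cP_{k-1}$ we must have $U_k = \cP_{k-1}$. Hence, for a primitive codimension 1 flag, being exceptional is equivalent to $\cD_2(\cU)$ failing to preserve $\{\cP_n\}$, and the lemma becomes the statement that a non-polynomial operator exists in $\cD_2(\cU)$ precisely when the standard flag is not preserved.

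For the implication ``non-polynomial operator $\Rightarrow$ exceptional'' I would argue the contrapositive. If $\cD_2(\cU)$ preserved $\{\cP_n\}$, then every $T \in \cD_2(\cU)$ would send $1,x,x^2$ into $\cP_0,\cP_1,\cP_2$, so $T(1),T(x),T(x^2)$ are polynomials. As $1,x,x^2$ are linearly independent, Proposition \ref{prop:cramer} writes the coefficients $p,q,r$ of $T$ as rational functions with denominator $W(1,x,x^2)$, which is the nonzero constant $-2$; hence $p,q,r$ are polynomials and $T$ has no pole. This contradicts the presence of a non-polynomial operator, so $\{\cP_n\}$ is not preserved and, using primitivity, $\cU$ is exceptional.

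For the converse I would again pass to the contrapositive: assuming every operator in $\cD_2(\cU)$ is polynomial, I show each preserves the standard flag. Fix $T = py'' + qy' + ry$ with $p,q,r$ polynomial and put $d = \max(\deg p - 2,\ \deg q - 1,\ \deg r)$. Since $T$ preserves $\cU$ we have $T(p_k) \in U_k$, hence $\deg T(p_k) \le n_k$. Writing $p_k = x^{n_k} + (\text{lower order})$ and noting that $T$ raises degree by at most $d$, only the leading monomial $x^{n_k}$ can contribute to the coefficient of $x^{n_k + d}$ in $T(p_k)$; that coefficient equals $p_{d+2}\,n_k(n_k-1) + q_{d+1}\,n_k + r_d$, a quadratic in $n_k$. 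If $d \ge 1$ it must vanish for the arbitrarily large values $n_k$, forcing $p_{d+2} = q_{d+1} = r_d = 0$ and contradicting the definition of $d$. Thus $d \le 0$, i.e. $\deg p \le 2,\ \deg q \le 1,\ \deg r \le 0$, the Bochner bounds, which immediately yield $T(\cP_n) \subseteq \cP_n$. Therefore all of $\cD_2(\cU)$ preserves the standard flag and $\cU$ is not exceptional.

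The main obstacle is the leading-coefficient computation in the last paragraph: one must check that the highest excess-degree coefficient of $T(p_k)$ is unaffected by the lower-order terms of $p_k$ and is a fixed quadratic in $n_k$, so that its vanishing along the increasing sequence $n_k$ forces the Bochner bounds. Everything else is formal, resting only on Proposition \ref{prop:cramer} and the identification of the standard flag as the unique codimension 0 flag.
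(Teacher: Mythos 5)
Your proof is correct and takes essentially the same route as the paper: your leading excess-degree coefficient $p_{d+2}\,n_k(n_k-1)+q_{d+1}\,n_k+r_d$ is exactly the eigenvalue of the top term $T_N$ in the paper's degree-homogeneous decomposition \eqref{eq:Tyhomdecomp}--\eqref{eq:deghomog}, and both arguments derive the Bochner bounds from the fact that a quadratic in $n_k$ cannot vanish along the infinite, strictly increasing degree sequence of the flag. Your Cramer-rule justification of the forward direction, using Proposition \ref{prop:cramer} with the constant Wronskian $W(1,x,x^2)=-2$, simply makes explicit what the paper asserts as clear.
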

\begin{proof}
  It's clear that a non-polynomial operator cannot preserve the
  standard polynomial flag.  The standard flag is the unique
  codimension zero flag. Therefore, if a non-polynomial operator
  preserves a primitive, codimension 1 flag, that flag must be
  exceptional.

  Let us prove the converse. Let $T(y)$ be a polynomial operator that
  preserves a polynomial flag.  Consider the degree homogeneous
  decomposition
  \begin{equation}
    \label{eq:Tyhomdecomp}
    T(y) = \sum_{d=-2}^N T_d(y)     
  \end{equation}
  where
  \begin{align}
    \label{eq:Tdydef}
    T_d(y) &= x^d (\alpha_d x^2 y'' + \beta_d x y' +\gamma_d y)
  \end{align}
  and where $T_N$ is non-zero.  Since the operator is polynomial, we
  must have $\beta_{-2} = \gamma_{-2} = \gamma_{-1} = 0$. Also note that
  \begin{equation}
    \label{eq:deghomog}
    T_d(x^j) = (j(j-1) \alpha_d + \beta_d j + \gamma_d) x^{j+d}.
  \end{equation}
  Let 
  \[ y_k(x) = x^{n_k}+\text{lower deg. terms},\quad k=1,2,\ldots\] be
  a basis of the flag. If $N>0$, then the leading term $T_N(y)$ raises
  degree, and hence
  \[ T_N(x^{n_k}) = 0\] for all $k$.  By \eqref{eq:deghomog}, $T_N(y)$
  annihilates at most 2 distinct monomials, a contradiction.
  Therefore, $N\leq 0$, and we conclude that $T(y)$ is a
  Bochner-type operator \eqref{eq:bochner}.  However, such an
  operator preserves the standard polynomial flag.  Therefore, if
  $\cU$ is exceptional, then there is at least one operator in
  $\cD_2(\cU)$ that doesn't preserve the standard flag.  By the above
  argument, this operator must be non-polynomial.
\end{proof}

\begin{lem}
  \label{lem:1pole}
  Let $T(y)$ be a non-polynomial operator that leaves invariant a
  codimension 1 polynomial subspace $U\subset \cP_n$.  If $n\geq 5$,
  then $T(y)$ has exactly one pole.  Furthermore, up to a translation
  in $x$, a basis of $U$ has one of the following three forms:
  \begin{align}
    \label{eq:x1nu0}
    &x,x^2,x^3,\ldots, x^n\\ 
    \label{eq:x1nu1}
    &1+ax, x^2, x^3,\ldots, x^n\\
    \label{eq:x1nu2}
    &1+a_1 x^2,x+a_2x^2,x^3,\ldots, x^n
  \end{align}
\end{lem}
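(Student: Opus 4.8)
The plan is to locate the poles of $T$ by reducing, near each pole, the invariance condition to a single linear constraint of order at most two on $U$, and then to show that two distinct constraints of this type cannot coexist once $n\ge 5$. This refines the Wronskian-denominator localization of Proposition \ref{prop:cramer} to a statement about the defining functional of $U$.

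First I would clear denominators, writing $p=P/D$, $q=Q/D$, $r=R/D$ with $D$ the monic least common multiple of the reduced denominators, so that $\gcd(P,Q,R,D)=1$. Then the poles of $T$ are exactly the roots of $D$, and $T$ is non-polynomial iff $\deg D\ge 1$. Since $U=\ker\phi$ for a single nonzero functional $\phi$ on $\cP_n$, and $T(U)\subseteq\cP_n$ consists of polynomials, for every $y\in U$ the polynomial $Py''+Qy'+Ry$ is divisible by $D$. Fixing a root $c$ of $D$ and reading off the constant Taylor coefficient at $c$ forces $\psi_c(y):=P(c)y''(c)+Q(c)y'(c)+R(c)y(c)=0$ for all $y\in U$. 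This functional is nonzero, for $\psi_c=0$ tested on $1,(x-c),(x-c)^2$ would give $P(c)=Q(c)=R(c)=0$ and hence a factor $(x-c)$ in $\gcd(P,Q,R,D)$, a contradiction. Thus $\ker\psi_c$ is a hyperplane containing the hyperplane $U$, so $\ker\psi_c=U$ and $\phi$ is proportional to $\psi_c$. This is the conceptual heart: invariance forces the defining functional of $U$ to be a local functional of order at most two, supported at the pole.

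Next I would rule out a second pole. Suppose $c_1\ne c_2$ are both roots of $D$; then $\phi$ is proportional to both $\psi_{c_1}$ and $\psi_{c_2}$, so $U=\ker\psi_{c_1}=\ker\psi_{c_2}$. The three polynomials $g_k=(x-c_2)^3 x^k$, $k=0,1,2$, have degree at most $5\le n$ and a triple root at $c_2$, hence lie in $\ker\psi_{c_2}=U$ and are therefore annihilated by $\psi_{c_1}$ as well. Computing the value, first, and second derivatives of $g_0,g_1,g_2$ at $c_1$ gives a $3\times 3$ matrix whose determinant reduces to $g_0(c_1)^3=(c_1-c_2)^9\ne 0$; the three jets are then linearly independent, so the only triple $(P(c_1),Q(c_1),R(c_1))$ annihilating all of them is zero, contradicting $\psi_{c_1}\ne 0$. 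Hence $D$ has a single root over $\Cset$, which, being the unique root of a real polynomial, is real; after a translation we place it at the origin, and $T$ has exactly one pole.

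Finally, with the pole at $0$, $\phi$ is proportional to $\psi_0(y)=2P(0)b_2+Q(0)b_1+R(0)b_0$ in terms of the coefficients $y=\sum_j b_j x^j$, a constraint involving only $b_0,b_1,b_2$; in particular $x^3,\dots,x^n\in U$ automatically. Splitting into the cases $P(0)\ne 0$, $P(0)=0\ne Q(0)$, and $P(0)=Q(0)=0\ne R(0)$ and solving the single relation for $b_2$, $b_1$, or $b_0$ respectively yields precisely the three normal forms \eqref{eq:x1nu2}, \eqref{eq:x1nu1}, and \eqref{eq:x1nu0}. The main obstacle is the step excluding a second pole, and the determinant computation there is exactly where $n\ge 5$ enters, since it requires the degree-$5$ polynomial $(x-c_2)^3x^2$ to lie in $\cP_n$; this is the only point at which the argument could break down for smaller $n$.
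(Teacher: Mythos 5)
Your proof is correct, and it takes a genuinely different, ``dual'' route to the paper's. The paper works with an order-reduced echelon basis of $U$ adapted to a pole translated to $x=0$ and invokes Proposition \ref{prop:cramer} twice: first to bound the echelon gap by $\nu\le 2$ (if $\nu\ge 3$ then $W(y_1,y_2,y_3)=2+O(x)$ does not vanish at the pole), and then to localize all poles via $W(x^3,x^4,x^5)=\mathrm{const}\cdot x^9$; the three normal forms are simply the echelon bases with $\nu=0,1,2$. You instead extract from the divisibility $D\mid Py''+Qy'+Ry$ the $2$-jet functional $\psi_c$ at each pole $c$ and observe that $U=\ker\psi_c$; this one observation subsumes both the echelon reduction and the bound $\nu\le2$ (since $\psi_0$ constrains only $b_0,b_1,b_2$), and it delivers the normal forms by solving a single linear equation, with the case split $P(0)\neq 0$, $P(0)=0\neq Q(0)$, $P(0)=Q(0)=0$ matching \eqref{eq:x1nu2}, \eqref{eq:x1nu1}, \eqref{eq:x1nu0} exactly. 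Your two-pole exclusion is the paper's Wronskian computation in disguise: your $3\times 3$ jet matrix is the Wronskian matrix of $g_0,g_1,g_2$ at $c_1$, and
\[ W\bigl((x-c_2)^3,\, x(x-c_2)^3,\, x^2(x-c_2)^3\bigr)(c_1) = (c_1-c_2)^9\, W(1,x,x^2) = 2(c_1-c_2)^9, \]
so your stated value $g_0(c_1)^3$ is off by the harmless factor $2$; both arguments use $n\ge 5$ at the same spot, namely to fit a degree-$5$ element of $U$ with a triple zero at one point into $\cP_n$. What your route buys is symmetry in the two poles and an explicit treatment of complex poles, which the paper leaves tacit when it translates ``a pole'' to the origin. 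One small patch your write-up needs: if $c_2\notin\Rset$ then $g_k=(x-c_2)^3x^k$ is not a real polynomial and so cannot literally lie in $U$; you must complexify, taking $g_k\in U\otimes\Cset=\ker_{\Cset}\psi_{c_2}$ (the same codimension count runs verbatim for the complex-linear functional, and complex-linearity of $\psi_{c_1}$ extends the divisibility constraint to $U\otimes\Cset$), after which the jet determinant computation proceeds over $\Cset$ unchanged. With that one-line fix the argument is complete.
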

\begin{proof}
  By applying a translation if necessary, there is no loss of generality in assuming
  that $x=0$ is a pole of $T(y)$.  Since the codimension is 1, the
  subspace admits an order-reduced basis of the form
  \begin{align*}
    y_j &= x^{j-1}+a_{j-1} x^\nu, &&j=1,\ldots, \nu\\ \nonumber
    y_j &= x^{j},&&  j=\nu+1\ldots n,
  \end{align*}
  where $0\leq \nu \leq n$.  The matrix representation of this basis
  is a $n\times n$ matrix in row-reduced echelon form. This matrix has
  $n$ pivots and 1 gap in position $\nu$.

  Our first claim is that $\nu \leq 2$.  Suppose not.  Then 
  \[ W(y_1,y_2,y_3) = W(1,x,x^2) + \text{higher degree terms} = 2 +
  O(x).\] By assumption, $T(y_i)$ is a polynomial. Hence, Proposition
  \ref{prop:cramer} implies that $x=0$ is not a pole of the operator, a
  contradiction.  

  Having established that $\nu\leq 2$, we observe that $x^3,x^4,x^5\in
  U$.  The Wronskian of these monomials is a multiple of
  $x^9$. Therefore, by Proposition \ref{prop:cramer}, $x=0$ is the
  unique pole.
\end{proof}

We are now ready to give the proof of Theorem \ref{thm:main}.  Let
$\cU$ be an exceptional polynomial flag with stable codimension 1.  By
Lemma \ref{lem:polyop}, there exists a non-polynomial operator $T(y)$
that preserves the flag.  By Lemma \ref{lem:1pole}, this operator has
a unique pole.  Without loss of generality we assume that $x=0$ is the
unique pole.  We rule out possibility \eqref{eq:x1nu0}, because if
this holds for even one $U_k,\; k\geq 5$,  then it must hold for all
$k$. This would imply that $x,x^2,x^3,\ldots$ is a basis of the flag
--- a violation of the primitivity assumption.

Let us rule out possibility \eqref{eq:x1nu2}.  Let $T(y)$ be a
non-polynomial operator that preserves the flag.  Since $x=0$ is the
unique pole, we can decompose the operator into degree-homogeneous
terms
\[ T(y) = \sum_{d=N_1}^{N_2} T_d(y) \] where $T_d(y)$ has the form
shown in \eqref{eq:Tdydef}, and where $T_{N_1}$ and $T_{N_2}$ are
non-zero.  By the argument used in the proof of Lemma
\ref{lem:polyop}, we must have $N_2\leq 0$.  By assumption, $T(1+a_1
x^2), T(x+a_2 x^2), T(x^3)$ are polynomials.  However, if $N_1\leq
-4$, then this condition requires that 
\[T_{N_1}(1)= T_{N_1}(x) = T_{N_1}(x^3) = 0,\] which means that
$T_{N_1}$ is zero --- a contradiction.  If $N_1=-3$, then $T_{-3}$
annihilates $1,x,x^5$, a contradiction.  If $N_1=-2$, then $T_{-2}$
annihilates $1,x,x^4$, another contradiction.  Similarly, if $N_1=-1$,
then $T_{-1}$ annihilates 1.  Hence $T_{-1}(y) = \alpha_{-1}xy'' +
\beta_{-1}y'$.  This means that there is no pole --- a contradiction.
Therefore, $N_1=N_2=0$, but that means that $T(y)$ is a polynomial
operator --- again, a contradiction.  This rules out possibility
\eqref{eq:x1nu2}.

This leaves \eqref{eq:x1nu1} as the only possibility.  Since we assume
that the codimension is stable, $\deg(1+ax) = 1$ and hence $a\neq 0$.
We scale $x$ to transform $1+ax$ to $x+1$.  This concludes the proof
of the main theorem.

\subsection{$X_1$ polynomials}
The above theorem explains the origin of the adjective ``exceptional''
and leads directly to a more general class of orthogonal polynomials
--- outside the class described by Bochner's theorem.  Two families of
orthogonal polynomials arise naturally when we consider codimension 1
flags.  We describe these $X_1$ polynomials below.

In order to construct codimension 1 polynomial systems, we must
consider $\cD_2$ of the flag spanned by $x+1,x^2,x^3,\ldots$.  
\begin{prop}
  The most general 2nd order operator that preserves the flag $\{ x+1,
  x^2,x^3,\ldots\}$ has the form
  \begin{equation}
    \label{eq:T1genop}
    T(y) = (k_2 x^2+ k_1 x+k_0) y''-(x+1)\left(k_1
      +\frac{2k_0}{x}\right)y'+ \left(k_1 +
      \frac{2k_0}{x}\right)y,
  \end{equation}
  where $k_2,k_1,k_0$ are real constants.
\end{prop}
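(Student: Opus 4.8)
The plan is to combine the explicit description of this particular flag with the degree-homogeneous decomposition already exploited in Lemma~\ref{lem:polyop}. First I would record the elementary characterization: writing $p=\sum_i c_i x^i$, one has $p\in U_k$ if and only if $\deg p\le k$ and $c_0=c_1$, so that $U_k=\{p\in\cP_k: p(0)=p'(0)\}$, with $x+1$ the unique (up to scale) degree-one element and $x^2,x^3,\ldots$ furnishing the rest. Hence $T$ preserves the flag if and only if $T(x+1)\in U_1$, i.e. $T(x+1)=\lambda(x+1)$ for some constant $\lambda$, and $T(x^j)\in U_j$ for every $j\ge 2$; the latter says $T(x^j)$ is a polynomial of degree at most $j$ whose constant and linear coefficients agree.

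Next I would localize and decompose. By Lemma~\ref{lem:1pole}, applied to $U_k$ for large $k$, any $T\in\cD_2$ has at most one pole, which a translation places at $x=0$; thus $p,q,r$ are Laurent polynomials and $T=\sum_d T_d$ is a finite sum of the degree-homogeneous pieces $T_d(y)=x^d(\alpha_d x^2 y''+\beta_d x y'+\gamma_d y)$ of \eqref{eq:Tdydef}, with $T_d(x^j)=(j(j-1)\alpha_d+\beta_d j+\gamma_d)x^{j+d}$ as in \eqref{eq:deghomog}. The degree-non-raising condition forces $T_d=0$ for $d>0$ exactly as in Lemma~\ref{lem:polyop}: a nonzero top piece with $d>0$ would have to annihilate every monomial $x^j$, $j\ge 2$, which a quadratic in $j$ cannot do.

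The core of the argument is to bound the pole order, that is, to show $T_d=0$ for all $d\le -3$, leaving only $d\in\{-2,-1,0\}$. For this I would feed the two families of conditions from $T(x^j)\in U_j$ — the vanishing of the negative-power coefficients and the matching of the constant and linear coefficients — together with the vanishing of the nonpositive-degree coefficients of $T(x+1)$, into the scalars $c_{d,j}:=j(j-1)\alpha_d+\beta_d j+\gamma_d$. For $d\le -5$ the three equations $c_{d,2}=c_{d,3}=c_{d,4}=0$ form a Vandermonde-type system and kill $(\alpha_d,\beta_d,\gamma_d)$; the two borderline cases $d=-4$ and $d=-3$ are then disposed of by combining these with one further coefficient-matching relation and with the relevant coefficient of $T(x+1)$. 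I expect this bookkeeping to be the \emph{main obstacle}: it is where one must use exactly the right conditions, in the right order, to avoid an under- or over-determined system.

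Finally, with only $T_{-2},T_{-1},T_0$ surviving, I would solve the residual linear relations, reassemble $p=\sum_d\alpha_d x^{d+2}$, $q=\sum_d\beta_d x^{d+1}$, $r=\sum_d\gamma_d x^d$, and rename $\alpha_0=k_2$, $\alpha_{-1}=k_1$, $\alpha_{-2}=k_0$ to recover \eqref{eq:T1genop}. One subtlety deserves flagging: the conditions above leave the constant term $\gamma_0$ of $r$ free, and this extra degree of freedom is precisely the identity operator $T(y)=y$, which preserves every subspace. The three-parameter family in the statement is singled out by the normalization $T(x+1)=0$ — equivalently, by working modulo the identity, i.e. modulo a constant shift of the spectral parameter — which forces $\gamma_0=\alpha_{-1}=k_1$ and yields exactly \eqref{eq:T1genop}.
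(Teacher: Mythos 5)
Your proposal cannot be compared line-by-line with the paper's proof, because the paper gives none: it defers to \cite[Proposition 4.10]{GKM6}. What you have built is therefore a self-contained substitute assembled from the survey's own toolkit (Proposition \ref{prop:cramer}, the homogeneous decomposition \eqref{eq:Tdydef}--\eqref{eq:deghomog} from Lemma \ref{lem:polyop}), and it is correct. I checked that the bookkeeping you flag as the main obstacle does close, provided the elimination runs upward from the lowest piece $T_{-M}$, since the matching conditions couple \emph{adjacent} homogeneous pieces: for $M\geq 5$ the equations $c_{-M,2}=c_{-M,3}=c_{-M,4}=0$ kill $T_{-M}$; for $M=4$ use $c_{-4,2}=c_{-4,3}=0$ together with $\gamma_{-4}=0$ from the $x^{-4}$ coefficient of $T(x+1)$; for $M=3$ use $c_{-3,2}=0$, $\gamma_{-3}=0$, and the matching relation $c_{-3,4}=0$, which becomes available only once $T_{-4}$ has been killed. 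The residual system on $T_{-2},T_{-1},T_0$ then consists of exactly five relations: $\gamma_{-2}=0$ and $\beta_{-2}+\gamma_{-1}=0$ from the negative-degree coefficients of $T(x+1)$; $6\alpha_{-2}+3\beta_{-2}=0$ from the linear coefficient of $T(x^3)$; $2\alpha_{-2}+2\beta_{-2}=2\alpha_{-1}+2\beta_{-1}+\gamma_{-1}$ from matching in $T(x^2)$; and $\beta_{-1}+\gamma_{-1}=\beta_0$ from matching in $T(x+1)$, with all conditions for $j\geq 4$ automatic. Solving with $\alpha_{-2}=k_0$, $\alpha_{-1}=k_1$, $\alpha_0=k_2$ gives precisely \eqref{eq:T1genop} plus $(\gamma_0-k_1)\,y$. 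In particular, your closing caveat is a genuine sharpening rather than pedantry: $\cD_2$ of this flag is four-dimensional, \eqref{eq:T1genop} annihilates $x+1$ and does not contain the identity, so the Proposition is literally correct only modulo additive multiples of the identity, i.e., under your normalization $T(x+1)=0$; this is harmless because $T\mapsto T+\lambda$ merely shifts the spectral parameter.

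Two small repairs. First, you cannot ``place'' the pole at $x=0$ by a translation: the flag here is a fixed object, and a translation would change it. The pole location is forced, not chosen --- apply Proposition \ref{prop:cramer} to $x^2,x^3,x^4\in U_4$, whose Wronskian is $2x^6$ (or to $x^3,x^4,x^5$, as in the proof of Lemma \ref{lem:1pole}), to conclude that every pole of $T$ lies at $x=0$ and that $p,q,r$ are Laurent polynomials, which is exactly what your finite decomposition requires; Lemma \ref{lem:1pole} itself only bounds the \emph{number} of poles. Second, only the \emph{negative}-degree coefficients of $T(x+1)$ vanish a priori; the constant coefficient equals $\lambda$, so ``nonpositive'' overshoots. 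This is immaterial for the pole-order bound, which uses only the negative ones, but the constant coefficient must enter through the matching relation $\beta_{-1}+\gamma_{-1}+\gamma_0=\beta_0+\gamma_0$ rather than by being set to zero --- otherwise you would be imposing the normalization $T(x+1)=0$ before the step where you justify that no generality is lost in doing so.
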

\noindent See \cite[Proposition 4.10]{GKM6} for the proof.

Thus to obtain $X_1$ orthogonal polynomials it suffices to determine
all possible values of $k_2, k_1, k_0$ for which $P(x), W(x)$ as given
by \eqref{eq:Pdef} \eqref{eq:Wdef} satisfy the conditions of a PSLP.
This analysis is performed in \cite{GKM09b}.  In summary, non-singular
weights arise only for the case where $k_2 x^2+k_1 x + k_0$ either has
two distinct real roots or if $k_2=0, k_1\neq 0$.  The first case
leads to the Jacobi $X_1$ polynomials; the second leads to the $X_1$
Laguerre polynomials. In both cases, the polynomial flags span a dense
subspace of the respective Hilbert space \cite[Proposition 3.1,
Proposition 3.3]{GKM09b}. We summarize the key properties of these two
families below.

\subsection{$X_1$-Jacobi polynomials}
Let $\alpha\neq \beta$ be real parameters such that
\begin{equation}
\label{alfabetacond}
\alpha>-1,\quad \beta>-1, \quad \sgn \alpha=\sgn\beta.
\end{equation}
Set
\begin{equation}\label{eq:abcdef}
  a= \frac{1}{2}(\beta-\alpha),\quad
  b= \frac{\beta+\alpha}{\beta-\alpha},\quad
  c=b+1/a.
\end{equation}
Note that, with the above assumptions, we have $|b|>1$.  Now let
\begin{equation}
  \label{eq:ui} u_1=x-c,\qquad u_i=(x-b)^i,\quad  i\geq 2.
\end{equation}
Define the measure $d\hmuab=\hWab\,dx$ where
\begin{equation}
  \label{eq:Wabdef}
 \hWab = \frac{(1-x)^\alpha  (1+x)^\beta}{(x-b)^2},\quad x\in (-1,1).
\end{equation}
Since $\hWab>0$ for $-1<x<1$, the scalar product
\begin{equation}
  \label{eq:jacobiproduct}
   (f,g)_{\alpha,\beta} := \int^1_{-1} f(x) g(x)\,  d\hmuab,
\end{equation}
is positive definite. Also note that the above measure has finite
moments of all orders.

We now define the $X_1$-Jacobi polynomials $\hPab_n,\, n=1,2,\ldots$
as the sequence obtained by orthogonalization of the flag spanned by
$u_1, u_2, u_3, \ldots$ with respect to the scalar product
\eqref{eq:jacobiproduct}, and by imposing the normalization
condition\footnote{This convention differs from the one adopted in
  \cite{GKM09b}.  The change is made to conform with the convention
  adopted below for the generalized $X_m$ Jacobi polynomials.}
\begin{equation}
  \label{eq:jacobinorm}
  \hPab_n(1) = n\binom{\alpha+n-1}{n}.
\end{equation}
\noindent
From their definition it is obvious that $\deg \hPab_n=n$. However, as
opposed to the ordinary Jacobi polynomials, the sequence starts with a
degree one polynomial.

Next, define the 2nd order operator
\begin{equation}
  \label{eq:Tabdef}
  \Td{\al,\be}(y) = (x^2-1) y''+2a
  \left(\frac{1-b\,x}{b-x}\right) \big((x-c)y'-y\big),
\end{equation}
An elementary calculation shows that this operator preserves the flag
$\{u_i\}$ as defined above.  Indeed, the flag $\{u_i\}$ and the
operator \eqref{eq:Tabdef} are obtained from the flag $x+1,
x^2,x^3,\ldots$ and from the operator \eqref{eq:T1genop} via the
following specialization and affine transformation
\begin{equation}
  \label{eq:X1jacparams}
  k_2 = 1, \quad k_1 = -2ab,\quad k_0 = (1-b^2)a, \quad x \to -a(x-b).
\end{equation}
Multiplying both sides of the equation
\[ -\Td{\al,\be}(y) =\lambda y \] by $\hWab$ leads to the
following PSLP:
\begin{align}
  \label{eq:JacobiSLP}
  & ((1-x^2) \hWab\, y')' + 2a
  \left(\frac{1-b\,x}{b-x}\right) \hWab\, y   = \lambda\,  \hWab\, y,\\
  \label{eq:JacobiSLPa}
  &\lim_{x\to 1^-} (1-x)^{\al+1} ((x-c)y'-y) = 0,\\
  \label{eq:JacobiSLPb}
  &\lim_{x\to -1^+} (1+x)^{\be+1} ((x-c)y'-y) = 0.
\end{align}
The boundary conditions select polynomial solutions.  The self-adjoint
form of \eqref{eq:JacobiSLP} ensures that the solutions are orthogonal
relative to $d\hmuab$.  Therefore, the $X_1$ Jacobi polynomials can
also be described as polynomial solutions, $y = \hPab_n(x)$, of the
following 2nd order equation:
\begin{equation}
  \label{eq:X1Jacobilambda}
   \Td{\al,\be}(y) = (n-1)(\al+\be+n) y.  
\end{equation}
\subsection{$X_1$-Laguerre polynomials}

For $k>0$, set
\begin{equation}
  \label{eq:vi}
  v_1=x+k+1,\qquad v_i=(x+k)^i,\quad
  i\geq 2
\end{equation}
Define the measure $d\hmuk= \hat W_k\,dx$ where
\begin{equation}
  \label{eq:Wkdef}
  \hat W_k = \frac{e^{-x} x^k}{(x+k)^2},\quad x\in (0,\infty)
\end{equation}
and observe that $\hat W_k>0$ on the domain in question.  Therefore,
the following inner product is positive definite:
\begin{equation}
  \label{eq:Laguerreproduct}
   (f,g)_k:= \int^\infty_{0} f(x) g(x)\,d\hmuk,
\end{equation}
Also note that the above measure has finite moments of all orders.

We define the $X_1$-Laguerre polynomials $\hLk_i,\; i=1,2,3,\ldots$ as
the sequence obtained by orthogonalization of the flag spanned by
$v_1, v_2, v_3,\ldots$ with respect to the scalar product
\eqref{eq:Laguerreproduct} and subject to the normalization condition
\begin{equation}
  \label{eq:lagnorm}
  \hLk_n(x) = \frac{(-1)^nx^n}{(n-1)!} + \text{ lower order terms}\quad n\geq 1.
\end{equation}
Again, since we are orthogonalizing a non-standard flag, the
$X_{1}$-Laguerre polynomial sequence starts with a polynomial of
degree $1$, rather than a polynomial of degree $0$.

Define the operator
\begin{equation}
  \label{eq:Tkdef}
  T_k(y)= -x   y''+\left(\frac{x-k}{x+k}\right)\big((x+k+1)y'-y\big)
\end{equation}
and note that this operator leaves invariant the flag spanned by the
$v_i$.  Indeed the flag $\{v_i\}$ and \eqref{eq:Tkdef} are obtained
from the flag $x+1,x^2,\ldots$ and \eqref{eq:T1genop} via the
specializations and an affine transformation shown below:
\begin{equation}
  k_2 =0,\; k_1=-1,\; k_0 = k,\quad x\mapsto x+k
\end{equation}
The corresponding PSLP takes the form
\begin{align}
  \label{eq:LaguerreX1SLP}
  &(x \hW_k y')' + \left(\frac{x-k}{x+k}\right) \hW_k y = \lambda y,\\
  &\lim_{x\to 0^+} x^{k+1}  ((x+k+1) y'-y)=0,\\
  & \lim_{x\to \infty} e^{-x} ((x+k+1)y'-y) = 0.
\end{align}
As before, the boundary conditions select polynomial solutions, while
the self-adjoint form of \eqref{eq:LaguerreX1SLP} ensures that these
solutions are orthogonal relative to $d\hmuk$.  Therefore, the $X_1$
Laguerre polynomials can also be defined as polynomial solutions,
$y=\hLk_n$, of the following 2nd order equation:
\begin{equation}
  \label{eq:X1Laguerrelambda}
  T_k(y) = (n-1) y.
\end{equation}

\noindent
Having introduced the $X_1$ Jacobi and Laguerre polynomials, we are
able to state the following corollary of Theorem \ref{thm:main}.  The
proof is found in \cite{GKM09b}.
\begin{thm}
  \label{thm:main1}
  The $X_1$-Jacobi polynomials and the $X_1$ Laguerre polynomials are
  the unique orthogonal polynomial systems defined by a stable
  codimension-1 PSLP.
\end{thm}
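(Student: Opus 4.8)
The plan is to read off the two families by intersecting the flag classification of Theorem~\ref{thm:main} with the requirement that the defining operator generate a bona fide Sturm--Liouville problem. I would organize the argument in three stages: first pin down the flag, then write the most general invariant operator, and finally carve out the admissible parameter region.

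The first stage reduces the problem to the canonical flag. Let $T$ be the operator of a stable codimension-$1$ PSLP, so that its eigenpolynomials form the OPS and span a stable codimension-$1$ flag $\cU$. Since we are after genuinely new families, $\cU$ must be primitive: an imprimitive codimension-$1$ flag carries a common factor, and by Proposition~\ref{prop:gaugeequiv} its operator is gauge-equivalent to one preserving the standard flag, so the resulting system is merely an affine--gauge image of a classical OPS. Next, $T$ cannot be a polynomial operator: by the homogeneous-decomposition argument in the proof of Lemma~\ref{lem:polyop}, an exactly-solvable polynomial operator is of Bochner type and hence admits a degree-zero eigenpolynomial, so its eigenpolynomials span the standard, codimension-zero flag, contradicting codimension $1$. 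Thus $T$ is a non-polynomial operator preserving a primitive codimension-$1$ flag, so by Lemma~\ref{lem:polyop} the flag is exceptional, and Theorem~\ref{thm:main} identifies $\cU$, up to an affine change of $x$, with the flag $\{x+1,x^2,x^3,\ldots\}$.

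In the second stage I invoke the preceding Proposition: after the affine normalization, $T$ must be the general invariant operator \eqref{eq:T1genop}, depending on the three constants $k_2,k_1,k_0$. The third stage is the analytic heart of the proof. Using \eqref{eq:Pdef}--\eqref{eq:Wdef} I would compute $P$ and $W$ explicitly as quasi-rational functions of $(k_2,k_1,k_0)$ and impose the PSLP conditions: positivity and continuity of $W$ on an interval $I$, finite moments, decay of $P$ at the endpoints, and density of the eigenpolynomials. The zeros of the quadratic $k_2x^2+k_1x+k_0$ control the singularities of $W$ through the factor coming from the $2k_0/x$ term, and a case split on the discriminant then determines when a positive, finite-moment weight exists: two distinct real roots place the endpoints at the roots and yield, after matching to \eqref{eq:Tabdef}, the $X_1$-Jacobi weight on a bounded interval; the degenerate branch $k_2=0,\ k_1\neq0$ sends one endpoint to infinity and yields, after matching to \eqref{eq:Tkdef}, the $X_1$-Laguerre weight on a half-line; the remaining branches (complex roots, a double root, or $k_2=k_1=0$) fail either positivity or integrability and are discarded.

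The main obstacle is precisely this last stage: the parameter bookkeeping for the weight, and in particular the verification of completeness in the two surviving cases, which does not follow from the algebra of invariance but requires a genuine density argument in the weighted $\rL^2$ space --- this is the point established in \cite[Proposition 3.1, Proposition 3.3]{GKM09b}. I would also need to check the endpoint constraints carefully to recover the exact admissible parameter ranges (for instance $\al,\be>-1$ with $\sgn\al=\sgn\be$ in the Jacobi case and $k>0$ in the Laguerre case), since these are exactly the conditions under which the boundary form in \eqref{eq:Msymmetric} vanishes and one has self-adjointness, rather than mere formal symmetry.
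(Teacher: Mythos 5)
Your proposal follows essentially the same route as the paper: the paper states Theorem~\ref{thm:main1} as a corollary of Theorem~\ref{thm:main} combined with the general invariant operator \eqref{eq:T1genop}, and defers the parameter analysis of $k_2,k_1,k_0$ (two distinct real roots giving $X_1$-Jacobi, the branch $k_2=0,\ k_1\neq 0$ giving $X_1$-Laguerre, all other branches discarded) together with the density arguments to \cite[Propositions 3.1 and 3.3]{GKM09b}, exactly as you do. Your additional stage-one reductions (primitivity modulo Proposition~\ref{prop:gaugeequiv}, and ruling out polynomial operators via the homogeneous-decomposition argument of Lemma~\ref{lem:polyop}) correctly fill in the framing the paper leaves implicit.
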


\section{Higher codimension flags}
Even though the first examples of exceptional orthogonal polynomials
involve codimension-1 flags, recently announced examples \cite{SO1}
are proof that exceptional orthogonal polynomials can span flags of
arbitrarily high codimension.  Another important development is the
recent proof \cite{Quesne1} that the codimension-1 families can be
related to the classical orthogonal polynomials by means of a Darboux
transformation.  In a follow-up publication \cite{GKM10}, it was shown
that the new higher codimension examples are systematically derivable
by means of algebraic Darboux transformations \cite{GKM04}.

A closely related development involves the notion of shape-invariance
\cite{GEND83}, a methodology related to the study of exactly solvable
potentials.  The close connection between solvable potentials and
orthogonal polynomials is well recognized; consider the relationship
between the harmonic oscillator and Hermite polynomials, for example.

In the orthogonal polynomial context, we have to factorize general
second order operators, not just Schr\"odinger operators.  The
shape-invariance property of the classical potentials manifests as the
Rodrigues' formula for the corresponding polynomials.  It has been
pointed out that all the potentials related to exceptional orthogonal
polynomials exhibit the shape-invariance property\cite{Quesne1,SO4},
and that therefore, like their classical counterparts, exceptional
orthogonal polynomials have a Rodrigues' formula.  This phenomenon was
studied and explained in \cite{GKM10}, where it was shown that the
shape-invariance property of the $X_m$ (exceptional, codimension $m$)
polynomials follows from the permutability property of higher-order
Darboux transformations.

\subsection{The Darboux transformation}
In the remainder of this section, we review the some key definitions and
results related to Darboux transformations and shape-invariance, and
then illustrate these ideas with the example of $X_m$ Jacobi
polynomials\cite{STZ10}.

Consider the differential operators:
\begin{align}
  T(y)&=py''+q y'+ry\\
  \label{eq:ABdef}
  A(y) &= b(y'-wy),\quad  B(y) = \hb(y'-\hw y),
\end{align}
where $p,q,r,b,w,\hb,\hw$ are rational functions.
\begin{definition}
  We speak of a rational factorization if  there exists a constant
  $\lambda_0$ such that  
  \begin{equation}
    \label{eq:TBA}
    T = B A+ \lambda_0
  \end{equation}
  If the above equation holds, we call 
  \begin{equation}
    \label{eq:hTAB}
    \hT= A B  +\lambda_0.
  \end{equation}
  the partner operator.  We call
  \begin{equation}
    \label{eq:phidef}
    \phi(x) = \exp\int^x\!\! w\,dx,\quad w=\phi'/\phi
  \end{equation}
  a quasi-rational factorization eigenfunction and $b(x)$ the
  factorization gauge.
\end{definition}
The reason for the above terminology is as follows.  By \eqref{eq:TBA},
\begin{equation}
  \label{eq:phi0rel}
  T(\phi) = \lambda_0 \phi;
\end{equation}
hence the term factorization eigenfunction.  Next, consider two
factorization gauges $b_1(x), b_2(x)$ and let $\hT_1(y), \hT_2(y)$ be
the corresponding partner operators.  Then,
\[ \hT_2 = \mu^{-1} \hT_1\mu,\quad \text{where } \mu(x) = b_1(x)/
b_2(x).\] Therefore, the choice of $b(x)$ determines the gauge of the
partner operator. This is why we refer to $b(x)$ as the factorization
gauge.  Also, note that in \cite{GKM04} the above construction was
referred to as an \emph{algebraic Darboux transformation}.  However,
in light of the recently recognized role played by operators with
rational coefficients, the term \emph{rational factorization} seems to
be preferable.

% \begin{prop}
%   A rational factorization holds if and only if
%   \begin{align}
%     &\hb b = p,\\ 
%     &-\hq/p + \hb'/\hb = -q/p + b'/b = w+\hw\\
%     &r = -p(w' + w^2) - q w + \lambda_0 ,\\
%     &\hr = -p(\hw' + \hw^2) - \hq \hw + \lambda_0.
%   \end{align}
% \end{prop}
\begin{prop}
  Let $T(y)$ be a 2nd order operator exactly solvable by polynomials,
  and let $\phi(x)$ be a quasi-rational factorization eigenfunction with
  eigenvalue $\lambda_0$.  Then, there exists a rational factorization
  \eqref{eq:TBA} such that the partner operator is also exactly
  solvable by polynomials, and such that the partner flag is primitive
  (no common factors).
\end{prop}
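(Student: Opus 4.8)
The plan is to build the factorization explicitly from $\phi$, push the eigenpolynomials of $T$ through the intertwiner $A$, and then use the freedom in the factorization gauge $b$ to make the partner flag both polynomial and primitive. First I would set $w=\phi'/\phi$, which is rational since $\phi$ is quasi-rational, and take $A(y)=b(y'-wy)$ for a gauge $b$ to be fixed later; by construction $A(\phi)=0$. To produce $B$ I would reduce the order of $T-\lambda_0$ along $\phi$: writing $y'$ and $y''$ in terms of $z:=y'-wy$ and using $\phi''=(w'+w^2)\phi$, one obtains
\[ (T-\lambda_0)(y) = p\,z' + (pw+q)\,z + \big(p(w'+w^2)+qw+(r-\lambda_0)\big)y. \]
The coefficient of $y$ is exactly $\phi^{-1}(T-\lambda_0)(\phi)$, which vanishes by \eqref{eq:phi0rel}. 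Hence $T-\lambda_0=BA$ with $B(z)=(p/b)(z'-\hw z)$ and $\hw=b'/b-w-q/p$; this is a genuine rational factorization \eqref{eq:TBA} for \emph{every} rational choice of $b$, with $\hb=p/b$ so that $b\hb=p$ as the leading coefficients force. The partner $\hT=AB+\lambda_0$ is then a second-order operator with rational coefficients.

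Next I would exploit intertwining. From $T=BA+\lambda_0$ and $\hT=AB+\lambda_0$ one gets $AT=\hT A$ immediately, so if $T(y_j)=\lambda_j y_j$ are the eigenpolynomials of $T$, then $\hT(Ay_j)=\lambda_j\,Ay_j$: each $Ay_j$ is an eigenfunction of the partner operator. Since $Ay=0$ forces $y\propto\phi$, at most one $y_j$ is annihilated, so all but one of the $Ay_j$ are nonzero eigenfunctions of $\hT$.

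It remains to choose the gauge so that these eigenfunctions constitute a primitive polynomial flag. Writing $w=s/t$ in lowest terms, $A(y)=b\,(ty'-sy)/t$, so taking $b=t/g$, where $g$ is the greatest common divisor of the polynomials $ty_j'-sy_j$, makes every $\hat{y}_j:=Ay_j=(ty_j'-sy_j)/g$ a polynomial, and by the choice of $g$ these have no common factor, so the partner flag is primitive. (Equivalently, one starts from $b=t$ and divides out the common factor, invoking the gauge equivalence of Proposition \ref{prop:gaugeequiv}.) The degrees strictly increase: since $A$ is a fixed first-order operator it acts in a degree-graded way on leading terms, and the two contributions $ty_j'$ and $-sy_j$ can have cancelling leading coefficients for at most one value of $\deg y_j$; discarding that index together with the possibly annihilated one still leaves infinitely many $\hat{y}_j$ of strictly increasing degree. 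Hence $\hT$ is exactly solvable by polynomials.

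The main obstacle is this last step: arranging a single gauge $b$ that simultaneously clears denominators (polynomiality of the $\hat{y}_j$) and strips the common factor (primitivity), all while preserving a strictly increasing degree sequence. The delicate points are the at-most-one cancellation of leading coefficients in $ty_j'-sy_j$ and the possibility that $\phi$ itself coincides with some $y_j$; both are exceptional indices that must be excised before the strictly increasing degree sequence can be read off, and verifying that only finitely many such exceptions occur is where the argument needs the most care.
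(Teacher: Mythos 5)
Your proposal is correct and follows essentially the same route as the paper's proof: the same $A(y)=b(y'-wy)$, $B(y)=\hb(y'-\hw y)$ with $\hw=-w-q/p+b'/b$, $\hb=p/b$ and an initially unspecified gauge, the same choice of $b$ as the reduced denominator of $w$ subsequently stripped of the common factor of the $A(y_j)$ to secure primitivity, and the same intertwining relation $\hT A=AT$ to transfer the eigenpolynomials. Your explicit order-reduction check that $T=BA+\lambda_0$ and your verification that the degrees of the $A(y_j)$ stay strictly increasing (at most one leading-coefficient cancellation, since $\sigma=\tau-1$ and $n\,t_\tau=s_\sigma$ pin down a unique degree, plus the at most one index with $y_j\propto\phi$) merely make explicit details the paper compresses into ``an elementary calculation.''
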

\begin{proof}
  Let $w(x)=\phi'(x)/\phi(x)$ and let $b(x)$ be an as yet unspecified
  rational function. Set
  \begin{align}
    \label{eq:hwdef}
    &\hw = -w-q/p+b'/b,\\
    &\hb= p/b,
  \end{align}
  and let $A(y), B(y)$ be as shown in in \eqref{eq:ABdef}.  An
  elementary calculation shows that \eqref{eq:TBA} holds.  Let $y_1,
  y_2, \ldots$ be a degree-regular basis of the eigenpolynomials of
  $T(y)$.  We require that the flag spanned by $A(y_j)$ be polynomial
  and primitive (no common factors). Observe that if we take $b(x)$ to
  be the reduced denominator of $w(x)$, then $A(y_j)$ is a polynomial
  for all $j$.  However, this does not guarantee that $A(y_j)$ is free
  of a common factor.  That is a stronger condition, one that fixes
  $b(x)$ up to a choice of scalar multiple.  Finally, the intertwining
  relation
  \begin{equation}
    \label{eq:hTAAT}
    \hT A = A T
  \end{equation}
  implies that the $A(y_j)$ are eigenpolynomials of the partner $\hT$.
\end{proof}

Finally, let us derive the formula for the partner weight function.
\begin{prop}
  \label{prop:dualbphiW}
  Suppose that a PSLP operators $T(y)=py''+qy'+ry$ is related to a
  PSLP operator $\hT(y)=py''+\hq y'+ \hr y$ by a rational
  factorization with factorization eigenfunction $\phi(x)$ and
  factorization gauge $b(x)$, Then the dual factorization gauge,
  factorization eigenfunction and weight function are given by
  \begin{align}
    \label{eq:hbdef}
    & b\hb =p \\
    \label{eq:hWdef}
    &\hW/\hb =  W/b,\\
    \label{eq:hphidef}
    &\hb\hphi=1/(W\phi)
  \end{align}
\end{prop}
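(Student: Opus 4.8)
The plan is to read off all three identities from the explicit formulas $\hb = p/b$ and $\hw = -w - q/p + b'/b$ established in the proof of the preceding proposition (see \eqref{eq:hwdef}), together with the quasi-rational definitions $P = \exp\int^x q/p\,dx$, $W = P/p$ and $\phi = \exp\int^x w\,dx$ from \eqref{eq:Pdef}, \eqref{eq:Wdef}, \eqref{eq:phidef}, and their hatted counterparts $\hP = \exp\int^x \hq/p\,dx$, $\hW = \hP/p$, $\hphi = \exp\int^x \hw\,dx$. Here $\hphi$ is genuinely the factorization eigenfunction of the partner $\hT$: since $B(\hphi) = \hb(\hphi' - \hw\hphi) = 0$ by construction, \eqref{eq:hTAB} gives $\hT(\hphi) = (AB + \lambda_0)(\hphi) = \lambda_0\hphi$, mirroring the identity $A(\phi) = 0$ that produces $T(\phi) = \lambda_0\phi$ from \eqref{eq:TBA}. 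The first identity \eqref{eq:hbdef}, namely $b\hb = p$, is then immediate from $\hb = p/b$.

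For the eigenfunction identity \eqref{eq:hphidef} I would integrate the formula for $\hw$. Using that $w$, $q/p$ and $b'/b$ are the logarithmic derivatives of $\phi$, $P$ and $b$ respectively, integration of $\hw = -w - q/p + b'/b$ gives $\hphi = b/(P\phi)$, up to an irrelevant multiplicative constant. Multiplying by $\hb = p/b$ and using $P = pW$ then yields $\hb\hphi = p/(P\phi) = 1/(W\phi)$, which is exactly \eqref{eq:hphidef}.

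The weight identity \eqref{eq:hWdef} is the one that requires genuine input, since $\hW$ is defined through $\hq$, which has not yet been computed. I would obtain $\hq$ by expanding the partner operator $\hT = AB + \lambda_0$ and reading off the coefficient of $y'$, which is $\hq = b\hb' - b\hb(w+\hw)$; differentiating $b\hb = p$ to replace $b\hb'$ and using $b\hb(w+\hw) = p(w+\hw)$ together with $w + \hw = -q/p + b'/b$, this collapses to $\hq = q + p' - 2pb'/b$. Consequently $\hq/p = q/p + (\ln p)' - 2(\ln b)'$, whose integral exponentiates to $\hP = Pp/b^2$, so that $\hW = \hP/p = P/b^2$. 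Dividing by $\hb = p/b$ and using $P = pW$ gives $\hW/\hb = P/(bp) = W/b$, which is \eqref{eq:hWdef}.

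The main obstacle is the bookkeeping in this last step: extracting $\hq$ from the operator product $AB$ requires differentiating products of rational functions and then collapsing them by means of $b\hb = p$, and one must keep in mind that $P,W,\phi,\hP,\hW,\hphi$ are all defined only up to multiplicative constants (being exponentials of indefinite integrals), so every identity is to be read modulo such constants, which are fixed afterward by the normalization of the measure and the eigenfunctions. A conceptually cleaner alternative that bypasses computing $\hq$ is to observe that $-B$ is the formal adjoint of $A$ regarded as a map $\rL^2(I, W\,dx) \to \rL^2(I, \hW\,dx)$: matching the coefficients of the first derivative in this adjoint relation forces $\hb = b\,\hW/W$, i.e.\ $\hW/\hb = W/b$ directly, and reflects the fact that the self-adjointness of $T$ and $\hT$ is precisely what makes the two weights compatible with the factorization.
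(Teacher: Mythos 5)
Your proposal is correct and follows essentially the same route as the paper's own proof: you read $b\hb=p$ off the factorization $T=BA+\lambda_0$, derive $\hq=q+p'-2pb'/b$ from the partner product $AB$ (equivalently, from the relation $w+\hw=-q/p+b'/b=-\hq/p+\hb'/\hb$ that the paper uses), exponentiate via \eqref{eq:Pdef}--\eqref{eq:Wdef} to get $\hW=P/b^2$ and hence \eqref{eq:hWdef}, and integrate $\hw$ to obtain \eqref{eq:hphidef}. The only differences are expository — you make explicit the bookkeeping, the up-to-constant caveat, and a sketched adjoint-based alternative that the paper's terse proof leaves implicit.
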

\begin{proof}
  Equation \eqref{eq:hbdef} follows immediately from \eqref{eq:ABdef}
  \eqref{eq:TBA}.  Writing
  \begin{equation}
    \label{eq:hTdef}
    \hT(y) = p y'' + \hq y' + \hr y,
  \end{equation}
  equation \eqref{eq:hTAB} implies that
  \begin{equation}
    \label{eq:whwrel}
    w+\hw = -q/p+b'/b = -\hq/p + \hb'/\hb.
  \end{equation}
  Hence,
  \begin{equation}
    \label{eq:hqqrel}
    \hq = q+p'-2pb'/b.
  \end{equation}
  From here, \eqref{eq:hWdef} follows by equations \eqref{eq:Pdef}
  \eqref{eq:Wdef}.  Equation \eqref{eq:hphidef} follows from
  \eqref{eq:phidef}.
\end{proof}

The adjoint relation between $A$ and $B$ allows us to compare the
$\rL^2$ norms of the two families.  
\begin{prop}
  Let $T(y), \hT(y)$ be PSLP operators related by a rational
  factorization \eqref{eq:TBA} \eqref{eq:hTAB}.  Let $\{y_j\}$ be the
  eigenpolynomials of $T(y)$ and let $\hy_j = A(y_j)$ be the
  corresponding partner eigenpolynomials. Then
  \begin{equation}
    \label{eq:Ayjnorm}
    \int^{x_2}_{x_1} \!\!A(y_j)^2 \, \hW dx  =  (\lambda_0-\lambda_j)
    \int^{x_2}_{x_1}   y_j^2 \, Wdx  
  \end{equation}
  where $x_1<x_2$ are the end points of the Sturm-Liouville problem
  in question.
\end{prop}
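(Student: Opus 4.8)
The plan is to read \eqref{eq:Ayjnorm} as the assertion that $A$ and $B$ are mutually adjoint, up to a boundary term, with respect to the two weighted pairings $\langle f,g\rangle_W=\int_{x_1}^{x_2}fg\,W\,dx$ and $\langle f,g\rangle_{\hW}=\int_{x_1}^{x_2}fg\,\hW\,dx$. Concretely, the first step is to establish the first-order Green identity
\[
  \int_{x_1}^{x_2}\!A(f)\,g\,\hW\,dx+\int_{x_1}^{x_2}\!f\,B(g)\,W\,dx=\bigl[b\hW\,fg\bigr]_{x_1}^{x_2}
\]
for all smooth $f,g$. To derive it I would expand both integrands using \eqref{eq:ABdef}, collect the two terms carrying a derivative, and use the weight relation $b\hW=\hb W$ (which is \eqref{eq:hWdef}) to recognize them as $S\,(fg)'$, where $S:=b\hW=\hb W$. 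Integrating this piece by parts produces the stated boundary term together with $-\int (S'fg)\,dx$; what remains is to check that the undifferentiated terms cancel, i.e. that $S'+S(w+\hw)=0$. This reduces to $S'/S=-(w+\hw)$, which follows by combining \eqref{eq:whwrel}, namely $w+\hw=-q/p+b'/b$, with a short computation of $(\log S)'=b'/b+\hW'/\hW$ via \eqref{eq:Pdef}--\eqref{eq:Wdef} and \eqref{eq:hqqrel}. This is a purely algebraic verification using only relations already recorded in Proposition \ref{prop:dualbphiW}.

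With the Green identity in hand, the second step is immediate: set $f=y_j$ and $g=A(y_j)$. The rational factorization \eqref{eq:TBA} gives $B\!\left(A(y_j)\right)=(T-\lambda_0)y_j=(\lambda_j-\lambda_0)y_j$, so the identity becomes
\[
  \int_{x_1}^{x_2}\!A(y_j)^2\,\hW\,dx+(\lambda_j-\lambda_0)\int_{x_1}^{x_2}\!y_j^2\,W\,dx=\bigl[b\hW\,y_j\,A(y_j)\bigr]_{x_1}^{x_2},
\]
which is exactly \eqref{eq:Ayjnorm} once the boundary term on the right is shown to vanish.

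The main obstacle is precisely this boundary term. Using $\hb=p/b$ from \eqref{eq:hbdef} and $W=P/p$ from \eqref{eq:Wdef} one finds $b\hW=P/b$, so that $b\hW\,A(y_j)=P(y_j'-wy_j)$ and the boundary term takes the form $\bigl[P\,(y_j'-wy_j)\,y_j\bigr]_{x_1}^{x_2}$. The piece $P\,y_j'y_j$ is $P$ times a polynomial and vanishes at the endpoints directly by (PSLP3). The remaining piece $P\,w\,y_j^2$ is the delicate one: here $w=\phi'/\phi$ from \eqref{eq:phidef} is merely rational, so a priori it could blow up at an endpoint. The point I would emphasize is that for a genuine PSLP the factorization eigenfunction $\phi$ (equivalently, the gauge $b$) has no zeros on the closed interval $\overline{I}$ --- this is exactly the nonsingularity requirement (PSLP1) for the partner weight $\hW=\hb W/b$ --- so $w$ is bounded near $x_1,x_2$ and $P\,w\,y_j^2\to 0$ again by (PSLP3). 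Equivalently, one may argue symmetrically from $b\hW=\hP/\hb$ that both $P$ and $\hP$ decay faster at the endpoints than any algebraic growth contributed by $b$ or $1/b$, so that the concomitant vanishes for either family. Verifying this endpoint behavior is the one step that genuinely uses the analytic hypotheses of the problem rather than formal algebra, and it is where I would expect to spend the most care.
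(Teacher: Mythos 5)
Your proposal is correct and follows essentially the same route as the paper: both hinge on the formal adjointness of $A$ and $-B$ via the Lagrange identity with concomitant $(P/b)fg = b\hW fg$, then set $f=y_j$, $g=A(y_j)$, use $BA = T-\lambda_0$, and kill the boundary term using (PSLP3) together with the nonvanishing of the gauge $b$ up to the endpoints (your derivation of the Green identity and the cancellation $S'+S(w+\hw)=0$ simply spells out what the paper asserts ``as a consequence of \eqref{eq:hWdef}''). The only minor discrepancy is at the finite endpoints: the paper excludes $b(x_i)=0$ by noting that $\hW=P/b^2$ would then fail to be integrable near $x_i$, i.e.\ by the finite-moments hypothesis (PSLP2) for the partner problem, rather than by the positivity condition (PSLP1), which only constrains the open interval.
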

\begin{proof}
  As a consequence of \eqref{eq:hWdef}, $A$ and $-B$ are formally
  adjoint relative to the respective measures:
  \begin{equation}
    \label{eq:ABadj}
    \int^{x_2}_{x_1}  A(f) g\,\hW dx +
    \int^{x_2}_{x_1}  B(g) f\, Wdx  =  (P/b) f g \Big|^{x_2}_{x_1},
  \end{equation}
  where $P(x)$ is defined by \eqref{eq:Pdef} and where $b(x)$ is the
  factorization gauge.  By assumption both $W(x), \hW(x)$ are positive
  on $(x_1, x_2)$.  By \eqref{eq:hWdef}, $\hW = P/b^2$. In particular,
  the numerator $b(x)$ cannot have any zeroes in $(x_1, x_2)$.  As
  well, if either $x_1, x_2$ are finite, we cannot have $b(x_i)=0$,
  because that would imply that $\hW$ is not square integrable near
  $x=x_i$.  Hence $|1/b(x)|$ is bounded from above on $(x_1,x_2)$.
  Therefore, if $f,g$ are polynomials then the right-hand side of
  \eqref{eq:ABadj} vanishes by (PSLP3).  Therefore,
  \begin{equation}
    \int^{x_2}_{x_1} \!\!\!A(y_j)^2 \, \hW dx  = - \int^{x_2}_{x_1}\!\!
    B(A(y_j)) y_j\,  W dx = 
    (\lambda_0-\lambda_j) \int^{x_2}_{x_1}  \!\! y_j^2 \, Wdx   
  \end{equation}
\end{proof}

\subsection{Shape-invariance}
Parallel to the $L^2$ spectral theory \cite{deift,sukumar}, rational
factorizations of a solvable operator $T(y)$ can be categorized as
formally state-deleting, formally state-adding, or formally
isospectral.  The connection between these formal, algebraic Darboux
transformations and their $L^2$ analogues is discussed in
\cite{GKM10}.  We speak of a formally state-deleting transformation if
the factorization eigenfunction $\phi(x)$ is the lowest degree
eigenpolynomial of $T(y)$.  We speak of a formally state-adding
transformation if the partner factorization eigenfunction
\[\hphi(x) = \exp \int^x \!\!{\hw}\,dx ,\]
with $\hw$ defined by \eqref{eq:hwdef}, is a polynomial.  We speak of
a formally isospectral transformation if neither $\phi$ nor $\hphi$
are polynomials.  Examples of all three types of factorizations will
be given below.

State-adding and state-deleting factorizations are
dual notions, in the sense that if the factorization of $T$ is
state-deleting, then the factorization of $\hT$ is state-adding, and
vice versa.

\begin{definition}
  Let $\kappa\in K$ be a parameter index set and let
  \begin{equation}
    \label{eq:Tkydef}
    T_\kappa(y) = p(x) y'' + q_\kappa(x) y' + r_\kappa(x) y,\quad \kappa\in K,
  \end{equation}
  be a family of operators that are exactly solvable by polynomials.
  If this family is closed with respect to the state-deleting
  transformation, we speak of \emph{shape-invariant} operators.
\end{definition}
To be more precise, let $\pi_\kappa(x)= y_{\kappa,1}(x)$ be be the
corresponding eigenpolynomial of lowest degree. Without loss of
generality, we assume that the corresponding spectral value is zero,
and let
\begin{equation}
  \label{eq:Tkfact}
  T_\kappa = B_\kappa A_\kappa, \qquad A_\kappa\pi_\kappa = 0
\end{equation}
be the corresponding factorization.  Shape-invariance means that there
exists a one-to-one map $h:K\to K$ and real constants $\lambda_\kappa$
such that
\begin{equation}
  \label{eq:Thkfact}
 T_{h(\kappa)}= A_\kappa B_\kappa +\lambda_\kappa.
\end{equation}

In accordance with \eqref{eq:Pdef}, define
\begin{equation}
  \label{eq:Pkdef}
  P_\kappa(x) = \exp\left(\int^x \!\!q_\kappa/p\right)
\end{equation}
Let $b_\kappa(x)$ denote the shape-invariant factorization gauge; i.e.;
\begin{equation}
  \label{eq:Akdef}
  A_\kappa(y) = (b_\kappa/\pi_\kappa) \cW(\pi_\kappa,y),
\end{equation}
where
\begin{equation}
  \cW(f,g) = f g' - f' g
\end{equation}
denotes the Wronskian operator.
By equation \eqref{eq:Thkfact},
\begin{equation}
  q_{h(\kappa)} = q_\kappa+p'-2p b_\kappa'/b_\kappa.
\end{equation}
It follows that
\begin{equation}
  \label{eq:bk2rel}
  p\,P_\kappa/P_{h(\kappa)}  = b_\kappa^2.
\end{equation}
Below, we will use this necessary condition to derive the
factorization gauge of a shape-invariant factorization.
\subsection{Jacobi polynomials}
The operator
\begin{equation}
  \label{eq:Lkydef}
  \Td{\al,\be}(y) := (1-x^2)y'' +
  (\be-\al+(\al+\be+2)x)y', 
\end{equation}
preserves the standard flag, and hence is exactly solvable by
polynomials.  The classical Jacobi polynomials $\Pab{n}(x),\;
\al,\be>-1,\; n=0,1,2,\ldots$ are the corresponding
eigenpolynomials:
\begin{equation}
  \label{eq:Lnkeigenvalue}
   \Td{\al,\be}\Pab{n} = -n(n+\al+\be+1)\Pab{n},
\end{equation}
subject to the normalization
\[ \Pab{n}(1) = \binom{n+\al}{n} \] The $L^2$ orthogonality is
relative to the measure $\Wab(x) dx$ where 
\begin{equation}
 \Wab(x)=(1-x)^\al (1+x)^\be,\quad x\in (-1,1).
\end{equation}

The classical operators are
shape-invariant by virtue of the following factorizations:
\begin{align}
  \label{eq:Tabsi1}
  \Td{\al,\be} &= \hB_{\al,\be} \hA_{\al,\be},\\ 
  \label{eq:Tabsi2}
  \Td{\al+1,\be+1} &= \hA_{\al,\be} \hB_{\al,\be} + \al+\be+2 \qquad \mbox{where}\\
  \hB_{\al,\be}(y) &= (1-x^2)y' + (\be-\al+(\al+\be+2)x)y,\\
  &= (1-x)^{-\al} (1+x)^{-\be}\left(y (1-x)^{\al+1} (1+x)^{\be+1}\right)'\\
  \hA_{\al,\be}(y) &= y'.
\end{align}
As a consequence, $\Bd{\al,\be}(y)$ acts as a raising operator:
\begin{equation}
  \label{eq:jacraising}
  \Bd{\al,\be} \Pud{\al+1,\be+1}{n} = -2(n+1) \Pab{n+1},
\end{equation}
and $\Ad{\al,\be}(y)$ as a lowering operator:
\begin{equation}
  \label{eq:jaclower}
  \Pab{n}\,'=\frac{1}{2}(1+\al+\be+n) P_{\al+1,\be+1,n-1}. \\
\end{equation}

The classical Rodrigues' formula, namely
\begin{equation}
  (1-x)^{-\al} (1-x)^{-\be}\frac{d^n}{dx^n}
  \left((1-x)^{\al+n} (1-x)^{\be+n}\right) = (-2)^n n! \Pab{n}(x),
\end{equation}
 follows by applying $n$ iterations
of the raising operator to the constant function.

The quasi-rational solutions of $\Td{\al,\be}(y) = \lambda y$ are known
\cite[Section 2.2]{bateman}:
\begin{align*}
%  \label{eq:chgpolys1}
  \phi_1(x)&= \Pab{m}(x), && \lambda_0 = -m(1+\al+\be+m)\\
  \phi_2(x)&=(1-x)^{-\al} (1+x)^{-\be} P_{-\al,-\be,m}(x) &&
  \lambda_0 =(1+m)(\al+\be-m)\\
  \phi_3(x)&= (1-x)^{-\al} P_{-\al,\be,m}(x) && \lambda_0 =
  (1+\be+m)(\al-m)\\
%  \label{eq:chgpolys4}
  \phi_4(x)&= (1-x)^{-\be} P_{\al,-\be,m}(x), && \lambda_0 = 
  (1+\al+m)(\be-m)
\end{align*}
where $m=0,1,2,\ldots$.  The corresponding factorizations were
analyzed in \cite{GKM04a}. Of these, $\phi_1$ with $m=0$ corresponds
to a state-deleting transformation and underlies the shape-invariance
of the classical Laguerre operator and the corresponding Rodrigues
formula.  For $m>0$, the $\phi_1$ factorization eigenfunctions yield
singular operators and hence do not yield novel orthogonal
polynomials.  The $\phi_2$ family results in a state-adding
transformation.  The resulting flags are semi-stable, like in Example
\ref{ex:codim2}; see \cite{GKM04} for a discussion.  The type $\phi_3$
$\phi_4$ factorizations result in novel orthogonal polynomials,
provided $\al,\be$ satisfy certain inequalities. These families
were referred to as the J1, J2 Jacobi polynomials in \cite{SO3}.  The
two families are related by the transformations $\al\leftrightarrow
\be, x \mapsto -x$. We therefore focus only on the $\phi_3$
factorization; no generality is lost.

The derivations that follow depend in an elementary fashion on the
following well-known identities of the Jacobi polynomials.  We will
apply them below without further comment.
\begin{align}
  \label{eq:jacid1}
  &\Pab{0}(x) = 1,\quad \Pab{n}(x) = 0,\quad n\leq -1,\\
  &\Pab{n}(x) = (-1)^n \Pab{n}(-x),\\
  & (x-1) \Pab{m}\,{}'(x) =
  (\al+m)\Pud{\al-1,\be+1}{m}(x) - \al \Pab{m}(x)\\
% & (x+1) P_{\al,\be,m}'(x) =
% (\be+m)P_{\al+1,\be-1,m}(x) - \be \Pab{m}(x)\\
  &\Pab{n}\,'(x)=\frac{1}{2}(1+\al+\be+n) P_{\al+1,\be+1,n-1}(x), \\
  \label{eq:jacid5}
  & \Pud{\al,\be-1}{n}(x) - \Pud{\al-1,\be}{n}(x) = \Pab{n-1}(x)
\end{align}

\subsection{The $X_m$ Jacobi polynomials.}
Fix an integer $m\geq 1$ and $\al,\be>-1$, 
and set \begin{equation}
  \xi_{\al,\be,m}=\Pud{-\al,\be}{m}(x)
\end{equation}
Take $\phi_3(x)$ as the factorization eigenfunction and take 
\[ b(x) = (1-x) \xi_{\al,\be,m}\] as the factorization gauge.
Applying \eqref{eq:ABdef} \eqref{eq:hwdef} and the identities
\eqref{eq:jacid1}-\eqref{eq:jacid5} , we obtain the following rational
factorization of the Jacobi operator \eqref{eq:Tabdef}:
\begin{align}
  \label{eq:TabBAabm}
  &\Td{\al,\be} = \Bd{\al,\be,m} \Ad{\al,\be,m}
  -(m-\al)(m+\be+1)\quad\text{where} \\
  &\Ad{\al,\be,m}(y) = (1-x) \xi_{\al,\be,m}\,
  y'+(m-\al)\xi_{\al+1,\be+1,m}\, y\\
  &\Bd{\al,\be,m}(y) = ((1+x)y'+(1+\be)y)/\xi_{\al,\be,m}
\end{align}
By \eqref{eq:hphidef}, $\hphi(x)=(1+x)^{-1-\beta}$ is the dual
factorization eigenfunction.  Since neither $\phi_3(x)$ nor $\hphi(x)$
is a polynomial, \eqref{eq:TabBAabm} is an example of a formally
isospectral factorization.  The corresponding partner operator is
shown below:
\begin{align}
  T_{\al,\be,m} &= A_{\al+1,\be-1,m} B_{\al+1,\be-1,m}
  -(m-\al-1)(m+\beta) ,\\
  T_{\al,\be,m}(y) &= T_{\al,\be}(y)
  -2\rho_{\al+1,\be-1,m}((1-x^2)y'+b(1-x)\,y) \\\nonumber 
  &\qquad  +m(a-b-m+1)y,\quad\text{where}\\
  \label{eq:rhodef}
  \rho_{\al,\be,m} &= \xi_{\al,\be,m}'/\xi_{\al,\be,m}\\ 
  &= \frac{1}{2}(1-\al+\be+m)\,\xi_{\al-1,\be+1,m-1}/\xi_{\al,\be,m} 
\end{align}

Based on the above factorization, we define the $X_m$, exceptional
Jacobi polynomials to be
\begin{align}
  \label{eq:hPdef}
  &\hPud{\al,\be,m}{n} = \frac{(-1)^{m+1}}{\al+1+j} \,A_{\al+1,\be-1,m}
  \Pud{\al+1,\be-1}{j},\quad j=n-m\geq 0\\
  &\quad = (-1)^m\left[\frac{1+\al+\be+j}{2(\al+1+j)}\,(x-1)
    \Pud{-\al-1,\be-1}{m}
    \Pud{\al+2,\be}{j-1}\right.\\
  &\qquad\qquad\qquad\left. +
    \frac{1+\al-m}{\al+1+j}\,\Pud{-2-\al,\be}{m}
    \Pud{\al+1,\be-1}{j}\right]
\end{align}
By construction, these polynomials satisfy
\begin{equation}
  \label{eq:hPlambda}
  T_{\al,\be,m} \hPud{\al,\be,m}{n} = -(n-m)(1+\al+\be+n-m)
  \hPud{\al,\be,m}{n}
\end{equation}
With the above definition, the generalized Jacobi polynomials obey the
following normalization condition
\begin{equation}
  \label{eq:hPnorm}
  \hPud{\al,\be,m}{n}(1) = \binom{\al+n-m}{n} \binom{n}{m},\quad
  n\geq m.
\end{equation}
Note that the $X_m$ operators and polynomials extend the classic family:
\begin{align}
  &T_{\al,\be,0}(y) = T_{\al,\be}(y)\\
  &\hPud{\al,\be,0}{n} = \Pud{\al,\be}{n}.
\end{align}

The $L^2$ norms of the classical polynomials are given by
\begin{align*}
  \int_{-1}^1 \left[\Pud{\al,\be}{n}(x)\right]^2 (1-x)^\al (1+x)^\be
  dx = N^{\al,\be}_{n}\intertext{where}
  N^{\al,\be}_{n}=\frac{2^{\al+\be+1}\Gamma(\al+1+n) \Gamma(\be+1+n)}{n!
    (\al+\be+2n+1) \Gamma(\al+\be+n+1)}
\end{align*}
By \eqref{eq:hWdef}, the weight for the $X_m$ Jacobi polynomials is
given by
\begin{equation}
  \label{eq:hWjacdef}
  \hW_{\al,\be,m}(x) = \frac{(1-x)^\al (1+x)^\be}{\xi_{\al+1,\be-1,m}(x)^2}
\end{equation}
In order for $L^2$ orthogonality to hold for the generalized
polynomials, we restrict $\al,\be$ so that the denominator in the
above weight is non-zero for $-1< x< 1$.  We also want to avoid the
degenerate cases where $x=\pm 1$ is a root of $\xi_{\al+1,\be-1,m}$.
To ensure that we obtain a codimension $m$ flag, we also demand that
$\deg \xi_{\al+1,\be-1,m} = m$.  The proof of the following
Proposition follows from the analysis in \cite[Chapter 6.72]{Sz}.
\begin{prop}
  \label{prop:albe1}
  Suppose that $\al,\be>-1$. Then $\deg \xi_{\al+1,\be-1,m}=m$ and
  $\xi_{\al+1,\be-1,m}(\pm 1) \neq 0$ if and only if $\be \neq 0$ and
  \begin{equation}
    \label{eq:albenonsing}
    \al,\al-\be-m+1 \notin \{ 0,1,\ldots, m-1 \}
  \end{equation}
\end{prop}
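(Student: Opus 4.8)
The plan is to reduce the entire statement to facts about a single classical Jacobi polynomial with shifted parameters, and then read off the three constituent requirements — correct degree, and non-vanishing at each of $x=\pm1$ — from explicit formulas, taking care that these formulas survive as polynomial identities even when a parameter is a negative integer. First I would write $\xi_{\al+1,\be-1,m}=\Pud{-\al-1,\be-1}{m}$, i.e.\ set $a=-\al-1$ and $b=\be-1$, so that the object of study is $\Pud{a,b}{m}(x)$. Under the hypothesis $\al,\be>-1$ one has $a<0$ and $b>-2$, and crucially $a$ (and occasionally $b$) may be a \emph{negative integer}; this is precisely the degenerate regime analyzed in \cite[Ch.~6.72]{Sz}. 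For that reason I would avoid the Gamma-function forms of the Jacobi data and instead use the rising-factorial (binomial) forms, which are honest polynomial identities in $(a,b)$ and hence remain valid through every degeneracy.

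Next I would treat the degree. The leading coefficient of $\Pud{a,b}{m}$ is $2^{-m}\binom{2m+a+b}{m}=\frac{1}{2^m m!}\prod_{j=m+1}^{2m}(a+b+j)$, a polynomial in $a+b$ that vanishes precisely when $a+b\in\{-2m,\dots,-(m+1)\}$. Substituting $a+b=\be-\al-2$ turns this into $\al-\be\in\{m-1,\dots,2m-2\}$, equivalently $\al-\be-m+1\in\{0,1,\dots,m-1\}$. Since $\deg\Pud{a,b}{m}=m$ holds exactly when this leading coefficient is nonzero, this recovers one half of \eqref{eq:albenonsing}.

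Then I would treat the endpoints. From the explicit sum for $\Pud{a,b}{m}$ one reads off the polynomial identities $\Pud{a,b}{m}(1)=\binom{m+a}{m}$ and $\Pud{a,b}{m}(-1)=(-1)^m\binom{m+b}{m}$. Writing $\binom{m+a}{m}=\frac{1}{m!}\prod_{i=1}^m(a+i)$ and substituting $a=-\al-1$ shows $\Pud{a,b}{m}(1)=0$ iff $\al\in\{0,1,\dots,m-1\}$, the remaining half of \eqref{eq:albenonsing}. Likewise $\Pud{a,b}{m}(-1)=0$ iff $b=\be-1\in\{-1,\dots,-m\}$, i.e.\ $\be\in\{0,-1,\dots,-m+1\}$; because $\be>-1$, the only offending value is $\be=0$, giving the condition $\be\neq0$. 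Conjoining the three non-vanishing requirements and applying De Morgan then yields exactly the stated characterization.

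The calculation is routine once these formulas are assembled; the one point requiring genuine care — and the reason \cite[Ch.~6.72]{Sz} is invoked — is that $a=-\al-1$ is generically a negative parameter and may be a negative integer, so the usual Gamma-function expressions degenerate into $0/0$. The main obstacle is therefore bookkeeping rather than ideas: one must consistently use the rising-factorial representations, verify that they remain valid as polynomial identities through every degeneracy, and confirm that the degree drop is governed \emph{solely} by the leading coefficient so that the degree and the two endpoint conditions decouple and can be imposed independently.
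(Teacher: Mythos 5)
Your proof is correct and matches the paper's intended argument: the paper gives no written proof, deferring entirely to the analysis of degenerate Jacobi polynomials in Szeg\H{o}, Chapter 6.72, and your computation --- leading coefficient $2^{-m}\binom{2m+a+b}{m}$ for the degree condition, and the endpoint identities $\Pud{a,b}{m}(1)=\binom{m+a}{m}$, $\Pud{a,b}{m}(-1)=(-1)^m\binom{m+b}{m}$ with $a=-\al-1$, $b=\be-1$ --- is precisely the content that citation points to. Your care in using the binomial/rising-factorial forms as polynomial identities in the parameters (so that the three conditions decouple and survive the negative-integer degeneracies) is exactly the right way to make the cited analysis rigorous.
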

\begin{prop}
  \label{prop:albe2}
  Suppose that the conditions of the preceding
  Proposition hold. The polynomial $\xi_{\al+1,\be-1,m}(x)$
  has no zeros in $(-1,1)$ if and only if  $\alpha>m-2$ and
  \[ \sgn(\al-m+1) = \sgn(\be).\]
\end{prop}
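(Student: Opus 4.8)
The plan is to translate the claim into a zero-counting statement for a Jacobi polynomial with one negative parameter and then invoke the classical Hilbert--Klein formulas. Recalling that $\xi_{\al+1,\be-1,m}=\Pud{-\al-1,\be-1}{m}$, the assertion is that this Jacobi polynomial, with parameters $a=-\al-1$ and $b=\be-1$, has no root in $(-1,1)$ precisely when $\al>m-2$ and $\sgn(\al-m+1)=\sgn\be$. Under the standing hypotheses $\al,\be>-1$ we always have $a<0$, while the sign of $b+1=\be$ is exactly what distinguishes the two regimes below. The tool is the analysis of zeros of Jacobi polynomials with arbitrary real parameters in \cite[Chapter 6.72]{Sz}, which expresses the number of zeros in $(1,\infty)$, in $(-\infty,-1)$, and in $(-1,1)$ through floor functions of the parameters, the total being $m$ once the degree equals $m$ and $\pm 1$ are excluded as roots (both guaranteed by Proposition \ref{prop:albe1}).

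First I would recall the mechanism behind these formulas: for $a,b>-1$ all $m$ zeros lie in $(-1,1)$, and as a parameter decreases below $-1$ the zeros leave the interval through the nearer endpoint. Quantitatively, the number of zeros that have crossed past $x=+1$ equals $\lfloor -a\rfloor=\lfloor \al+1\rfloor$ (capped at $m$), and the number past $x=-1$ equals $\lfloor -b\rfloor=\lfloor 1-\be\rfloor$; the count in $(-1,1)$ is $m$ minus the escaped zeros. Demanding that this interior count vanish is the single condition to be solved, and the bulk of the proof is the bookkeeping needed to read off the parameter region where it holds.

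I would then split on $\sgn\be$. If $\be>0$ then $b>-1$, so no zero escapes through $-1$, and the interior count vanishes exactly when all $m$ zeros escape through $+1$, i.e.\ when $\al+1>m$, which is $\al>m-1$; since this forces $\al>m-2$, it is equivalent to the stated pair of inequalities with $\sgn\be=+1$. If $-1<\be<0$ then $-b=1-\be\in(1,2)$, so exactly one zero escapes through $-1$, and the interior count vanishes iff the remaining $m-1$ escape through $+1$, i.e.\ $m-1<\al+1<m$, which reads $m-2<\al<m-1$; this is precisely $\al>m-2$ together with $\sgn(\al-m+1)=-1=\sgn\be$. The degenerate values $\al\in\{0,\dots,m-1\}$ and $\be=0$, at which the Hilbert--Klein count is ambiguous or the degree drops, are exactly the cases already removed by Proposition \ref{prop:albe1}, so the strict inequalities are the correct ones. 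As a consistency check, for $m=1$ the condition collapses to $\sgn\al=\sgn\be$, matching the requirement $|b|>1$ imposed in \eqref{alfabetacond} for the $X_1$-Jacobi weight.

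The main obstacle is the fine structure of the Hilbert--Klein formulas rather than the case split: I must verify that the escaped zeros really leave through the real axis (landing in $(1,\infty)$ or $(-\infty,-1)$) rather than coalescing into complex-conjugate pairs, since such a pair would also be absent from $(-1,1)$ and could spuriously enlarge the non-vanishing region and break the ``only if'' direction. Equivalently, I must confirm that \emph{outside} the stated region at least one \emph{real} zero remains in $(-1,1)$. This is where the hypotheses of Proposition \ref{prop:albe1} earn their keep: the non-degeneracy conditions $\al,\al-\be-m+1\notin\{0,\dots,m-1\}$ keep the parameters off the integer walls at which zeros change type, so the monotone migration of real zeros through the endpoints accounts for the full transition and the floor-function count is exact.
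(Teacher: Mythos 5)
Your overall strategy is the intended one --- the paper offers no detailed argument of its own, attributing the result to the analysis of zeros of Jacobi polynomials with general real parameters in \cite[Chapter 6.72]{Sz} --- but your execution of that analysis contains a genuine error, and it is precisely at the spot you yourself flag as the ``main obstacle.'' Your counting lemma asserts that the number of real zeros of $\Pud{-\al-1,\be-1}{m}$ lying in $(1,\infty)$ is $\lfloor \al+1\rfloor$ (capped at $m$), and that the non-degeneracy conditions of Proposition \ref{prop:albe1} force the escaped zeros to stay real, so that the interior count is $m$ minus the endpoint-escape counts. This is false, and the paper itself contains the counterexample: for $m=2$, $\al=5/4$, $\be=1/2$ (which satisfies every hypothesis of Proposition \ref{prop:albe1}, since $\be\neq 0$, $\al=5/4\notin\{0,1\}$, and $\al-\be-m+1=-1/4\notin\{0,1\}$, and lies inside your region $\be>0$, $\al>m-1$), the denominator polynomial is $\xi_{\al+1,\be-1,2}\propto 5x^2-14x+29$, whose discriminant is $196-580=-384<0$. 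Both zeros form a complex-conjugate pair; none lie in $(1,\infty)$, whereas your count predicts two there. So the non-degeneracy conditions do \emph{not} prevent zeros from leaving the real axis, and the interior count cannot be reconstructed as $m$ minus naive migration counts.

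The correct route through \cite[Chapter 6.72]{Sz} is to apply the Hilbert--Klein formulas directly to the number of zeros \emph{in} $(-1,1)$, rather than inferring it from exterior real-zero counts. Those formulas express the interior count through Klein's symbol $E(u)$ applied to $\tfrac12\bigl(|2m+a+b+1|-|a|-|b|+1\bigr)$ with $a=-\al-1$, $b=\be-1$, together with a parity correction governed by the signs of the generalized binomial coefficients $\binom{m+a}{m}$, $\binom{m+b}{m}$, $\binom{2m+a+b}{m}$; it is exactly this parity structure (the count is $2\lfloor\cdot\rfloor$ or $2\lfloor\cdot\rfloor+1$ according to a sign condition) that correctly accounts for zeros leaving in complex-conjugate pairs, and that your floor-function bookkeeping replaces with an invalid monotone-migration picture. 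The parity correction is also what the ``only if'' direction genuinely needs: outside the stated parameter region one shows the interior count is odd, hence at least one, via the sign factor --- not by tracking where the other zeros went. Your final region does agree with the Proposition (your case split $\be>0\Rightarrow\al>m-1$ and $-1<\be<0\Rightarrow m-2<\al<m-1$ is the correct reading of $\sgn(\al-m+1)=\sgn\be$, and your $m=1$ consistency check against \eqref{alfabetacond} is sound), so the defect is not in the answer but in the justification: the central quantitative step would not survive the paper's own second example.
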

This is a good place to compare the above results to the parameter
inequalities imposed in \cite{SO3,STZ10}.  These references impose the
condition
\[ \al>\be>m-1/2.\] Unlike Proposition \ref{prop:albe2}, this
condition fails to describe the most general non-singular weight
$\hW_{\al,\be,m}$.  Consider the following examples:
\begin{align}
  \hW_{1/3,-1/2,2} &= 288^2
  \frac{(1-x)^{1/3}(1+x)^{-1/2}}{(7x^2+2x-41)^2}\\
  \hW_{5/4,1/2,2} &= 128^2 \frac{(1-x)^{5/4} (1+x)^{1/2}}{(5x^2-14x+29)^2}
\end{align}
Neither of the above examples satisfy the parameter inequalities of
\cite{SO3, STZ10}, but both weights are non-singular on $(-1,1)$ and
have finite moments of all orders.  On the other hand, the parameter
values $m=2, \alpha=3/2, \beta=1/2$ give
\begin{align*}
  &\hW_{3/2,1/2,2}(x) = \frac{3}{8} (1-x)^{3/2}(1+x)^{1/2},\\
  &\hPud{3/2,1/2,2}{2+k} = \frac{3}{8} \Pud{3/2,1/2}{k},\quad k\geq 0.  
\end{align*}
In other words, for certain singular values of the parameters, the
codimension is actually less than $m$, and in some instances (such as
the one above) even yield the classical polynomials.  The condition
\eqref{eq:albenonsing} must be imposed in order to avoid such singular
possibilities.

\begin{prop}
  Suppose that $\al,\be>-1$ satisfy the conditions of Propositions
  \ref{prop:albe1} and \ref{prop:albe2}
  The $L^2$ norms of the $X_m$ Jacobi polynomials are given by
  \[
  \int_{-1}^1 \left[\hPud{\al,\be,m}{m+k}(x)\right]^2 \hW_{\al,\be,m} dx =
  \frac{(1+\al+k-m)(\be+m+k)}{(\al+1+k)^2}N^{\al+1,\be-1}_{k},\; k\geq 0.
  \]
\end{prop}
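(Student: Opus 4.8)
The plan is to read off the norm of $\hPud{\al,\be,m}{m+k}$ directly from the image-norm formula \eqref{eq:Ayjnorm}, applied to the classical Jacobi operator with shifted parameters. The starting point is the definition \eqref{eq:hPdef}, which exhibits $\hPud{\al,\be,m}{m+k}$ as the scalar multiple $\tfrac{(-1)^{m+1}}{\al+1+k}$ of $\Ad{\al+1,\be-1,m}\Pud{\al+1,\be-1}{k}$. Squaring, the integral in question equals $(\al+1+k)^{-2}$ times $\int_{-1}^1 [\Ad{\al+1,\be-1,m}\Pud{\al+1,\be-1}{k}]^2\,\hW_{\al,\be,m}\,dx$, so everything reduces to computing this last integral.

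To invoke \eqref{eq:Ayjnorm} I would take as base operator the classical Jacobi operator $\Td{\al+1,\be-1}$, whose eigenpolynomials are $y_k=\Pud{\al+1,\be-1}{k}$ with eigenvalues $\lambda_k=-k(k+\al+\be+1)$ by \eqref{eq:Lnkeigenvalue} (note $(\al+1)+(\be-1)=\al+\be$). The relevant rational factorization is \eqref{eq:TabBAabm} with $(\al,\be)$ replaced by $(\al+1,\be-1)$, namely $\Td{\al+1,\be-1}=\Bd{\al+1,\be-1,m}\Ad{\al+1,\be-1,m}-(m-\al-1)(m+\be)$, so that the factorization constant is $\lambda_0=-(m-\al-1)(m+\be)$ and the factorization gauge is $b(x)=(1-x)\xi_{\al+1,\be-1,m}$.

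Next I would check that the partner measure produced by this factorization is exactly $\hW_{\al,\be,m}$, so that \eqref{eq:Ayjnorm} integrates against the correct weight. Since $p=1-x^2$ for every Jacobi operator, equation \eqref{eq:hbdef} gives $\hb=(1+x)/\xi_{\al+1,\be-1,m}$, and then \eqref{eq:hWdef} applied to the base weight $(1-x)^{\al+1}(1+x)^{\be-1}$ returns $(1-x)^\al(1+x)^\be/\xi_{\al+1,\be-1,m}^2$, matching \eqref{eq:hWjacdef}. With this identification in place, \eqref{eq:Ayjnorm} yields
\[
\int_{-1}^1[\Ad{\al+1,\be-1,m}\Pud{\al+1,\be-1}{k}]^2\,\hW_{\al,\be,m}\,dx=(\lambda_0-\lambda_k)\,N^{\al+1,\be-1}_k,
\]
where the classical norm $N^{\al+1,\be-1}_k$ replaces $\int y_k^2\,W\,dx$. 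It then only remains to simplify the prefactor: a short computation gives
\[
\lambda_0-\lambda_k=(1+\al-m)(\be+m)+(1+\al+\be)k+k^2=(1+\al+k-m)(\be+m+k),
\]
and dividing by $(\al+1+k)^2$ delivers the stated formula.

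The one genuine subtlety, as opposed to arithmetic, lies in the application of \eqref{eq:Ayjnorm}: its proof requires that both $W$ and $\hW_{\al,\be,m}$ be positive with finite moments and that the boundary term in \eqref{eq:ABadj} vanish, which in turn needs $b(x)=(1-x)\xi_{\al+1,\be-1,m}$ to have no zeros in $(-1,1)$ nor at the endpoints. This is exactly what the admissibility conditions of Propositions \ref{prop:albe1} and \ref{prop:albe2} secure, since they guarantee that $\xi_{\al+1,\be-1,m}$ has full degree $m$ and is nonvanishing on $[-1,1]$. Once these hypotheses are confirmed to be in force, the remainder is bookkeeping of the parameter shifts and the elementary factorization of $\lambda_0-\lambda_k$ displayed above.
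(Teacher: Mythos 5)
Your proposal is correct and is essentially the paper's own proof: the paper disposes of this proposition with the single line ``This follows directly from \eqref{eq:Ayjnorm},'' and your unpacking --- applying the factorization \eqref{eq:TabBAabm} with the shift $(\al,\be)\mapsto(\al+1,\be-1)$, checking via \eqref{eq:hbdef}--\eqref{eq:hWdef} that the partner weight is \eqref{eq:hWjacdef}, squaring the prefactor $(-1)^{m+1}/(\al+1+k)$ from \eqref{eq:hPdef}, and factoring $\lambda_0-\lambda_k=(1+\al+k-m)(\be+m+k)$ --- is exactly the bookkeeping the paper suppresses. The one inaccuracy is your closing remark that the boundary argument requires $b(x)$ to be nonvanishing at the endpoints: your gauge $b=(1-x)\,\xi_{\al+1,\be-1,m}$ manifestly vanishes at $x=1$, but this is harmless (and the paper's proof of \eqref{eq:Ayjnorm} is similarly loosely worded) because what must vanish is the boundary term $(P/b)fg$ in \eqref{eq:ABadj}, and here $P/b=(1-x)^{\al+1}(1+x)^{\be}/\xi_{\al+1,\be-1,m}\to 0$ as $x\to 1$ since $\al+1>0$, with Propositions \ref{prop:albe1} and \ref{prop:albe2} guaranteeing only that the factor $\xi_{\al+1,\be-1,m}$ is nonzero on $[-1,1]$.
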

\begin{proof}
  This follows directly from \eqref{eq:Ayjnorm}.
\end{proof}

We summarize the above findings as follows.
\begin{thm}
  Let $m>1$ and $\alpha,\beta>-1$ be such that
  $\alpha>m-2,\sgn(\al-m+1)=\sgn(\be)$ and such that
  \eqref{eq:albenonsing} holds.  Let $\cU$ be the stable, codimension
  $m$ flag spanned by polynomials $y(x)$ such that $(1+x)y'+\be y$ is
  divisible by $\Pud{-\al-1,\be-1}{m}(x)$.  Let $\hW_{\al,\be,m}(x)$
  be the weight defined by \eqref{eq:hWdef}. Then, the $X_m$ Jacobi
  polynomials, as defined by \eqref{eq:hPdef} are the orthogonal
  polynomials obtained by orthogonalizing the flag $\cU$ relative to
  the weight $\hW_{\al,\be,m}(x)$ and subject to the normalization
  condition \eqref{eq:hPnorm}.
\end{thm}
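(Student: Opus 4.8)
The plan is to treat the theorem as an assembly of the factorization data already established, with one genuinely structural step: identifying the divisibility-defined flag $\cU$ with the span of the $X_m$ polynomials. Write $A=\Ad{\al+1,\be-1,m}$ and $B=\Bd{\al+1,\be-1,m}$, so that by \eqref{eq:TabBAabm} (with $\al,\be$ shifted to $\al+1,\be-1$) one has $\Td{\al+1,\be-1}=BA+\lambda_0$ and $\Td{\al,\be,m}=AB+\lambda_0$ with $\lambda_0=-(m-\al-1)(m+\be)$. Since $B(y)=\big((1+x)y'+\be y\big)/\Pud{-\al-1,\be-1}{m}$, the defining condition of $\cU$ is precisely that $B(y)$ be a polynomial. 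First I would check that each $X_m$ polynomial lies in $\cU$: by \eqref{eq:hPdef}, $\hPud{\al,\be,m}{n}$ is a scalar multiple of $A\,\Pud{\al+1,\be-1}{n-m}$, hence $B(\hPud{\al,\be,m}{n})$ is a scalar multiple of $BA\,\Pud{\al+1,\be-1}{n-m}=(\Td{\al+1,\be-1}+\lambda_0)\Pud{\al+1,\be-1}{n-m}$, which is a polynomial because the classical operator preserves polynomials. Thus $\hPud{\al,\be,m}{n}\in\cU$ and $\deg\hPud{\al,\be,m}{n}=n$.

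Next I would pin down $\cU$ by a dimension count. Set $L(y)=(1+x)y'+\be y$. Since $\be\neq0$ (Proposition \ref{prop:albe1}) and $\be>-1$, the only polynomial in $\ker L$ would be proportional to $(1+x)^{-\be}$, which is non-polynomial; so $L$ is injective, and as the leading coefficient of $L(y)$ is $(n+\be)\neq0$ times that of $y$, $L$ maps $\cP_n$ bijectively onto itself while preserving degree exactly. Reduction modulo $\Pud{-\al-1,\be-1}{m}$, which has degree exactly $m$ by Proposition \ref{prop:albe1}, maps $\cP_n$ onto $\cP_{m-1}$ for $n\geq m-1$; composing with the bijection $L$ shows that $y\mapsto L(y)\bmod\Pud{-\al-1,\be-1}{m}$ is surjective from $\cP_n$ onto $\cP_{m-1}$. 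By rank--nullity, $\dim(\cU\cap\cP_n)=(n+1)-m$ for $n\geq m$, while $\cU\cap\cP_{m-1}=0$. Hence $\cU$ is degree-regular with degree sequence $m,m+1,m+2,\ldots$, i.e.\ a stable codimension-$m$ flag, and its $n$-th space is $(n-m+1)$-dimensional. As the polynomials $\hPud{\al,\be,m}{m},\ldots,\hPud{\al,\be,m}{n}$ lie in that space, have distinct degrees $m,\ldots,n$, and number exactly $n-m+1$, they form a degree-ordered basis, so $\cU=\lspan\{\hPud{\al,\be,m}{n}\}_{n\geq m}$.

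It remains to realize the $\hPud{\al,\be,m}{n}$ as the orthogonalization of $\cU$. Under the hypotheses, Propositions \ref{prop:albe1} and \ref{prop:albe2} guarantee that $\Pud{-\al-1,\be-1}{m}$ has degree $m$, is nonzero at $x=\pm1$, and has no zero in $(-1,1)$; hence the weight $\hW_{\al,\be,m}$ of \eqref{eq:hWjacdef} is continuous and strictly positive on $(-1,1)$ with denominator bounded away from $0$, giving (PSLP1), and since $\al,\be>-1$ the moments are finite, giving (PSLP2). The boundary decay (PSLP3) follows from $\al,\be>-1$ together with $\Pud{-\al-1,\be-1}{m}(\pm1)\neq0$, and completeness (PSLP4) is imported from \cite{GKM09b}. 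With this PSLP for $\Td{\al,\be,m}$ in force, \eqref{eq:hPlambda} exhibits each $\hPud{\al,\be,m}{n}$ as an eigenpolynomial with eigenvalue $-(n-m)(1+\al+\be+n-m)$; these are pairwise distinct for $n\geq m$ because $\al+\be>-2$, so Green's formula \eqref{eq:Msymmetric}, whose boundary term vanishes by (PSLP3), forces $\langle\hPud{\al,\be,m}{n},\hPud{\al,\be,m}{n'}\rangle_{\hW_{\al,\be,m}}=0$ for $n\neq n'$. Combined with the previous paragraph, this identifies the $\hP$'s, up to scale, with the Gram--Schmidt orthogonalization of $\cU$; the scale is fixed by the normalization \eqref{eq:hPnorm}, which I would verify by evaluating \eqref{eq:hPdef} at $x=1$, where the factor $(x-1)$ kills the first summand and the survivor is computed from $\Pud{\gamma,\delta}{k}(1)=\binom{k+\gamma}{k}$ and a routine binomial identity.

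The main obstacle is the flag identification of the second paragraph: the divisibility condition is a priori opaque, and showing it cuts out exactly a stable codimension-$m$ subspace --- with lowest degree precisely $m$ and no accidental drop in codimension --- is exactly where the non-degeneracy hypotheses \eqref{eq:albenonsing} enter, through $\deg\Pud{-\al-1,\be-1}{m}=m$ and $\be\neq0$; the collapse of $\hW_{3/2,1/2,2}$ to a classical family shows what goes wrong when they fail. The remaining analytic ingredient, completeness (PSLP4), is the other potential sticking point, but it is established in \cite{GKM09b} rather than reproved here.
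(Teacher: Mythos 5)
Your proposal is correct and in substance coincides with the paper's treatment: the theorem there is offered as a summary of the preceding development, so its ``proof'' is the assembly of the factorization \eqref{eq:TabBAabm}, the eigenvalue relation \eqref{eq:hPlambda}, Propositions \ref{prop:albe1} and \ref{prop:albe2} for the regularity of the weight \eqref{eq:hWjacdef}, and the PSLP machinery of Section 2 --- which is exactly your third paragraph. What you add, and what the paper leaves implicit, is the identification of the divisibility-defined flag $\cU$ with $\lspan\{\hPud{\al,\be,m}{n}\}_{n\geq m}$: your observation that the defining condition says precisely that $\Bd{\al+1,\be-1,m}(y)$ is a polynomial, followed by the rank--nullity count $\dim(\cU\cap\cP_n)=n+1-m$, is the right argument and a genuine supplement to the text. (One marginal difference of route: the paper reads orthogonality off the adjointness identity \eqref{eq:ABadj} and classical Jacobi orthogonality, as in the norm formula \eqref{eq:Ayjnorm}, whereas you use the eigenvalue equation together with Green's formula \eqref{eq:Msymmetric}; both work.)

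Three small repairs. First, a sign: \eqref{eq:TabBAabm} gives $BA=\Td{\al+1,\be-1}-\lambda_0$, not $+\lambda_0$; this is harmless, since polynomiality of $B(\hPud{\al,\be,m}{n})$ is unaffected. Second, your spanning argument needs $\deg\hPud{\al,\be,m}{n}=n$ exactly, which you assert without checking; it is true but not free: the leading coefficient of $\Ad{\al+1,\be-1,m}\,\Pud{\al+1,\be-1}{j}$ is a nonzero multiple of $\binom{2m+\be-\al-2}{m}\,(m-\al-1-j)$, and its nonvanishing for all $j\geq 0$ is exactly where \eqref{eq:albenonsing} together with $\al>m-2$ enters a second time --- not only through $\deg\xi_{\al+1,\be-1,m}=m$ and $\be\neq 0$, as your closing paragraph suggests. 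Third, \cite{GKM09b} establishes density only for the $X_1$ families, so it cannot be cited for (PSLP4) when $m>1$; fortunately the theorem as stated concerns only the Gram--Schmidt orthogonalization of $\cU$ with respect to $\hW_{\al,\be,m}\,dx$, so completeness is not actually required, and your remaining verifications --- (PSLP1)--(PSLP3), distinctness of the eigenvalues from $\al+\be>-2$, and the evaluation at $x=1$ giving \eqref{eq:hPnorm} --- are sound.
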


Finally, let us discuss shape-invariance of the generalized Jacobi
operators.  The following Proposition was proved in \cite{GKM10}.
\begin{prop}
  \label{prop:xmjacsi}
  Let $\hA, \hB$ be the operators defined in \eqref{eq:hAabmdef}
  \eqref{eq:hBabmdef}.  Then
  \begin{equation}
    \label{eq:TabmhBhA}
    T_{\al,\be,m} = \hB_{\al,\be,m}\hA_{\al,\be,m},\quad
    \hA_{\al,\be,m} \hPud{\al,\be,m}{m} = 0
  \end{equation}
  is the state-deleting factorization of the $X_m$ Jacobi operator.
  Furthermore,
  \begin{equation}
    \label{eq:Tabmsi}
    T_{\al+1,\be+1,m} = \hA_{\al,\be,m} \hB_{\al,\be,m} + \al+\be+2,
  \end{equation}
  is the dual state-adding factorization.
\end{prop}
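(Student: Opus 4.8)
The plan is to realize both identities as instances of the rational-factorization calculus developed above, with the \emph{ground state} of $T_{\al,\be,m}$ serving as the factorization eigenfunction. By \eqref{eq:hPlambda} with $n=m$, the lowest-degree exceptional polynomial $\hPud{\al,\be,m}{m}$ satisfies $T_{\al,\be,m}\hPud{\al,\be,m}{m}=0$, so I would take it as the factorization eigenfunction with eigenvalue $\lambda_0=0$. Writing $w=(\hPud{\al,\be,m}{m})'/\hPud{\al,\be,m}{m}$ for its logarithmic derivative, the rational-factorization existence Proposition above yields operators $\hA_{\al,\be,m}(y)=b_{\al,\be,m}(y'-w\,y)$ and $\hB_{\al,\be,m}(y)=(p/b_{\al,\be,m})(y'-\hw\,y)$, with $p=1-x^2$ and $\hw$ given by \eqref{eq:hwdef}, such that $T_{\al,\be,m}=\hB_{\al,\be,m}\hA_{\al,\be,m}$ with no additive constant. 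Since $\hA_{\al,\be,m}\hPud{\al,\be,m}{m}=b_{\al,\be,m}\big((\hPud{\al,\be,m}{m})'-w\,\hPud{\al,\be,m}{m}\big)=0$ is automatic, the relation in \eqref{eq:TabmhBhA} is immediate, and this is precisely a state-deleting factorization.

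To pin down the factorization gauge I would invoke the necessary condition \eqref{eq:bk2rel}, which here reads $b_{\al,\be,m}^2=p\,P_{\al,\be,m}/P_{\al+1,\be+1,m}$ for the map $h(\al,\be)=(\al+1,\be+1)$. Using $W=P/p$ together with the explicit weight \eqref{eq:hWjacdef}, the ratio $P_{\al,\be,m}/P_{\al+1,\be+1,m}=\hW_{\al,\be,m}/\hW_{\al+1,\be+1,m}$ equals $(1-x^2)^{-1}\xi_{\al+2,\be,m}^2/\xi_{\al+1,\be-1,m}^2$, so the factor $p$ cancels and one is left with the clean gauge $b_{\al,\be,m}=\xi_{\al+2,\be,m}/\xi_{\al+1,\be-1,m}$. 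Substituting $w$ and this gauge into the general factorization formulas reproduces the explicit operators \eqref{eq:hAabmdef} \eqref{eq:hBabmdef}, whereupon $T_{\al,\be,m}=\hB_{\al,\be,m}\hA_{\al,\be,m}$ is the mechanical identity already guaranteed by the existence Proposition.

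The substance of the statement is the second identity \eqref{eq:Tabmsi}, which asserts that the partner $\hA_{\al,\be,m}\hB_{\al,\be,m}$, shifted by the constant $\al+\be+2$, is again an $X_m$ Jacobi operator with parameters advanced to $(\al+1,\be+1)$. By Proposition \ref{prop:dualbphiW} the partner operator carries the dual weight $p\,\hW_{\al,\be,m}/b_{\al,\be,m}^2$, which by the gauge just computed collapses to $\hW_{\al+1,\be+1,m}$; since this weight together with $p=1-x^2$ determines the first-order coefficient, the partner and $T_{\al+1,\be+1,m}$ already share their second- and first-order parts, and it remains only to show that their zeroth-order coefficients differ by exactly the constant $\al+\be+2$. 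I would reduce this to a single contiguity identity relating $\xi_{\al+1,\be-1,m}=\Pud{-\al-1,\be-1}{m}$ to $\xi_{\al+2,\be,m}=\Pud{-\al-2,\be}{m}$ and their logarithmic derivatives $\rho$ of \eqref{eq:rhodef}, an identity of exactly the type recorded in \eqref{eq:jacid1}--\eqref{eq:jacid5}. The ``state-adding'' label for the dual then follows from the general duality between the two notions, since the factorization of $T_{\al,\be,m}$ is state-deleting with a polynomial factorization eigenfunction.

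The hard part is this last matching: it is not a formal consequence of the factorization calculus but the genuine content of shape-invariance, and it turns on the contiguity relation tying $\xi_{\al+1,\be-1,m}$ to $\xi_{\al+2,\be,m}$. A more conceptual alternative, and the one underlying \cite{GKM10}, is to deduce shape-invariance from the permutability of Darboux transformations: $T_{\al,\be,m}$ arises from the classical operator $\Td{\al,\be}$ through the isospectral factorization \eqref{eq:TabBAabm}, the classical family is manifestly shape-invariant via $\hA_{\al,\be}(y)=y'$ and $\hB_{\al,\be}(y)=(1-x^2)y'+(\be-\al+(\al+\be+2)x)y$, and the elementary classical step commutes with the Darboux dressing up to the relabeling $(\al,\be)\mapsto(\al+1,\be+1)$. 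Reordering the two steps by permutability then delivers \eqref{eq:Tabmsi}, with the eigenvalue shift $\al+\be+2$ inherited directly from the classical factorization. Either way the crux is the same commutation--contiguity relation among the $\xi$ polynomials; everything else is the formal factorization bookkeeping already in place.
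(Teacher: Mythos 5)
Your proposal takes essentially the same route as the paper, which in fact gives no self-contained proof of this Proposition: it defers the substantive identities to \cite{GKM10}, where they follow from the permutability of higher-order Darboux transformations, and confines itself to deriving the operators --- taking $\phi=\hPud{\al,\be,m}{m}\propto\xi_{\al+2,\be,m}$ as factorization eigenfunction with eigenvalue $0$, extracting the gauge $b=\xi_{\al+2,\be,m}/\xi_{\al+1,\be-1,m}$ from the necessary condition \eqref{eq:bk2rel} with $h(\al,\be)=(\al+1,\be+1)$, and reading off the dual gauge and dual eigenfunction from Proposition \ref{prop:dualbphiW}. Your first two paragraphs reproduce this derivation correctly (including the cancellation of $p=1-x^2$ in the ratio of the weights \eqref{eq:hWjacdef}), and your closing appeal to permutability is precisely the paper's cited proof. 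The one place you go beyond the paper --- a direct verification of \eqref{eq:Tabmsi} by coefficient matching --- is a sound reduction as far as it goes: \eqref{eq:TabmhBhA} is indeed automatic for any choice of gauge once $T_{\al,\be,m}\hPud{\al,\be,m}{m}=0$, and your gauge forces the partner $\hA_{\al,\be,m}\hB_{\al,\be,m}$ to agree with $T_{\al+1,\be+1,m}$ in its second- and first-order coefficients, leaving only the zeroth-order terms to compare. But that final step, which you candidly flag as the crux, is left unproved: the required contiguity identity relating $\xi_{\al+1,\be-1,m}$, $\xi_{\al+2,\be,m}$ and their logarithmic derivatives $\rho_{\al+1,\be-1,m}$, $\rho_{\al+2,\be,m}$ is never written down or verified, and it is not simply an instance of \eqref{eq:jacid1}--\eqref{eq:jacid5}; it is equivalent to shape-invariance itself. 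So your direct route is an outline with the decisive computation missing, and your complete argument coincides with the paper's, namely the citation of \cite{GKM10}.
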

\noindent In essence, we are asserting that the generalized operators obey the
same shape-invariance relations as their classical counterparts; c.f.,
equations \eqref{eq:Tabsi1} \eqref{eq:Tabsi2}.

The proof of Proposition \ref{prop:xmjacsi} relies on the
permutability property of higher order Darboux transformation and goes
beyond the scope of this survey.  We limit ourselves to explicitly
deriving the raising and lowering operators used in the above
factorization.

We already know the factorization eigenfunction:
\[ \phi(x) = \hPud{\al,\be,m}{m}(x) = (-1)^{m}
\,\frac{1+a-m}{1+a}\,\Pud{-\al-2,\be}{m} (x). \]
We make use of \eqref{eq:bk2rel} to determine the factorization
gauge.  Making  use of the fact that $h(\al,\be) = (\al+1,\be+1)$ we
obtain
\begin{equation}
  \label{eq:bsi}
  b(x) = \frac{\xi_{\al+2,\be,m}}{\xi_{\al+1,\be-1,m}}.
\end{equation}
We use Proposition \ref{prop:dualbphiW} to derive the dual
factorization gauge
\[ \hb(x) = (1-x^2)\frac{\xi_{\al+1,\be-1,m}}{\xi_{\al+2,\be,m}}\]
and the dual factorization eigenfunction,
\[ \hphi(x) = (1-x)^{-\al-1} (1+x)^{-\be-1} \xi_{\al+1,\be-1,m} .\]
We thereby obtain
\begin{align}
  \label{eq:hAabmdef}
  \hA_{\al,\be,m}(y) &=\frac{\xi_{\al+2,\be,m}}{\xi_{\al+1,\be-1,m}}
  (y'-\rho_{\al+2,\be,m} y)\\
  \label{eq:hBabmdef}
  \hB_{\al,\be,m}(y)
  &=(1-x^2)\frac{\xi_{\al+1,\be-1,m}}{\xi_{\al+2,\be,m}}
  \left[y'-\left(\rho_{\al+1,\be-1,m}+
      \frac{\al+1}{1-x}-\frac{\be+1}{1+x}\right) y\right]
\end{align}
These shape-invariant factorizations serve as a good illustration of
the duality between formal state-adding and state-deleting
transformations; here $\phi(x)$ is a polynomial but $\hphi(x)$ is
merely a quasi-rational function.  

As well, the shape-invariant factorization illustrates that $b(x)$,
the factorization gauge, is not necessarily a polynomial.  Here,
\[ w  = -\phi'/\phi = \rho_{\al+2,\be,m}.\]
The denominator is $\xi_{\al+2,\be,m}$ but the transformation
\[ y\mapsto \xi_{\al+2,\be,m}(y'-w y),\quad y= \hPud{\al,\be,m}{n},\;
n\geq m \]
would produce an imprimitive flag.  The common factor is
$\xi_{\al+1,\be-1,m}$, and that is why the correct factorization gauge
is the rational function shown in \eqref{eq:bsi}.

As a consequence of these shape-invariant factorizations, we have the
following lowering and raising identities for the $X_m$ Jacobi
operators; c.f., \eqref{eq:jacraising} \eqref{eq:jaclower}
\begin{align*}
  &\hB_{\al,\be,m} \hPud{\al+1,\be+1,m}{m+k} = 2(1+k)
  \hPud{\al,\be,m}{m+k+1},\quad k\geq 0;\\
  &\hA_{\al,\be,m} \hPud{\al,\be,m}{m+k} = \frac{1}{2}(1+\al+\be+k)
  \hPud{\al+1,\be+1,m}{m+k-1}.
\end{align*}

\paragraph{\textbf{Acknowledgements}}
\thanks{
We thank Ferenc Tookos for useful comments and suggestions.
The research of DGU was supported in part by MICINN-FEDER grant MTM2009-06973
and CUR-DIUE grant 2009SGR859. The research of NK was supported in part by NSERC
grant RGPIN 105490-2004. The research of RM was supported in part by NSERC grant
RGPIN-228057-2004.

\end{document}